\newcommand{\suppress}[1]{}
\newcommand{\comment}[1]{}
\definecolor{DarkBlue}{rgb}{0,0,0.8}  
\definecolor{DarkOrange}{rgb}{0.8,0.4,0}  
\def\mylinkcolor{DarkBlue}
\newtheorem{theorem}{Theorem}[section]
\newtheorem{proposition}[theorem]{Proposition}
\newtheorem{lemma}[theorem]{Lemma}
\newtheorem{definition}[theorem]{Definition}
\theoremstyle{definition}
\newtheorem{example}[theorem]{Example}
\newtheorem{remark}[theorem]{Remark}
\newtheorem{conjecture}[theorem]{Conjecture}
\numberwithin{equation}{section}
\newcommand{\complex}{{\mathbb C}}
\newcommand{\reals}{{\mathbb R}}
\newcommand{\tensor}{\otimes}
\newcommand{\adjoint}{\dagger}
\newcommand{\eqdef}{\coloneqq}
\newcommand{\ket}[1]{| #1 \rangle}
\newcommand{\bra}[1]{ \langle #1 |}
\newcommand{\ketbra}[2]{| #1 \rangle\!\langle #2 |}
\newcommand{\braket}[2]{\langle #1 | #2 \rangle }
\newcommand{\density}[1]{\ketbra{#1}{#1}}
\newcommand{\ketbraq}[1]{\ketbra{#1}{#1}}
\newcommand{\norm}[1]{\left\| #1 \right\|}
\newcommand{\size}[1]{\left| #1 \right|}
\newcommand{\set}[1]{\left\{ #1 \right\}}
\newcommand{\complexi}{{\mathrm{i}}}
\newcommand{\ii}{{\mathrm{i}}}
\newcommand{\bi}{{\mathbf{i}}}
\newcommand{\bj}{{\mathbf{j}}}
\newcommand{\bk}{{\mathbf{k}}}
\newcommand{\abs}[1]{\left| #1 \right|}
\newcommand{\id}{{\mathbb 1}}
\newcommand{\I}{{\mathbb 1}}
\newcommand{\bC}{\mathbb{C}}
\newcommand{\bH}{\mathbb{H}}
\newcommand{\tran}[0]{^\textnormal{\tiny{T}}}
\newcommand{\hc}[0]{^{\dagger}}
\DeclareMathOperator{\tr}{tr}
\DeclareMathOperator{\trace}{Tr}
\newcommand{\PauliX}{X}
\newcommand{\PauliY}{Y}
\newcommand{\PauliZ}{Z}
\newcommand{\mes}{\upphi} 
\newcommand{\bq}{{\bm q}}
\newcommand{\ba}{{\bm a}}
\newcommand{\bb}{{\bm b}}
\newcommand{\bx}{{\bm x}}
\newcommand{\by}{{\bm y}}
\newcommand{\bz}{{\bm z}}
\newcommand{\bw}{{\bm w}}
\newcommand{\cH}{\mathcal{H}}
\newcommand{\tU}{\widetilde{U}}
\newcommand{\sQ}{{\mathsf{Q}}}
\newcommand{\sP}{{\mathsf{P}}}
\title{\textbf{Mutually Unbiased Measurements, Hadamard Matrices, and Superdense Coding}}
\author{
M\'{a}t\'{e} Farkas~\thanks{ICFO-Institut de Ciencies Fotoniques, The Barcelona Institute of Science and Technology, 08860 Castelldefels, Spain. Email: \texttt{mate.farkas@icfo.eu}~.}
\and 
J\k{e}drzej Kaniewski~\thanks{Faculty of Physics, University of Warsaw, Pasteura 5, 02-093 Warsaw, Poland. Email: \texttt{jkaniewski@fuw.edu.pl}~.}
\and 
Ashwin Nayak~\thanks{Department of Combinatorics and Optimization,
and Institute for Quantum Computing,
University of Waterloo, 200 University Ave.\ W., Waterloo, ON,
N2L~3G1, Canada. Email: \texttt{ashwin.nayak@uwaterloo.ca}~.}
}
\date{\bf April 7, 2023}
\begin{document}

\maketitle

\begin{abstract}
Mutually unbiased bases (MUBs) are highly symmetric bases on complex Hilbert spaces, and the corresponding rank-1 projective measurements are ubiquitous in quantum information theory. In this work, we study a recently introduced generalisation of MUBs called \textit{mutually unbiased measurements} (MUMs). These measurements inherit the essential property of complementarity from MUBs, but the Hilbert space dimension is no longer required to match the number of outcomes. This operational complementarity property renders MUMs highly useful for device-independent quantum information processing. 

It has been shown that MUMs are strictly more general than MUBs. In this work we provide a complete proof of the characterisation of MUMs that are direct sums of MUBs. We then construct new examples of MUMs that are not direct sums of MUBs. A crucial technical tool for this construction is a correspondence with quaternionic Hadamard matrices, which allows us to map known examples of such matrices to MUMs that are not direct sums of MUBs. Furthermore, we show that---in stark contrast with MUBs---the number of MUMs for a fixed outcome number is unbounded. 

Next, we focus on the use of MUMs in quantum communication. We demonstrate how any pair of MUMs with~$d$ outcomes defines a~$d$-dimensional superdense coding protocol. Using MUMs that are not direct sums of MUBs, we disprove a recent conjecture due to Nayak and Yuen on the rigidity of superdense coding, for infinitely many dimensions. The superdense coding protocols arising in the refutation reveal how shared entanglement may be used in a manner heretofore unknown.
\end{abstract}

\section{Introduction}
\label{sec-intro}

A pair of bases for a~$d$-dimensional Hilbert space is said to be \emph{mutually unbiased\/} if the squared length of the projection of any basis element from the first onto any basis element of the second is exactly~$1/d$. Mutually unbiased bases (MUBs)~\cite{DEBZ10} are well-studied objects in quantum information theory, and they are optimal in several tasks, including some in quantum cryptography. Thus, they correspond to a fundamentally important class of measurements.

Mutually unbiased \emph{measurements\/} (MUMs) were recently introduced by Tavakoli, Farkas, Rosset, Bancal, and Kaniewski~\cite{TFR+21} as a generalization of mutually unbiased bases. The optimality of MUBs in information processing tasks can be traced back to their \emph{complementarity relations\/}. MUMs were introduced in the context of Bell inequalities, and it was shown that MUMs have the same complementarity relations as MUBs. Furthermore, MUMs admit the same entropic uncertainty relations and the same measurement incompatibility robustness as MUBs. However, MUMs are defined in a ``device-independent'' manner, i.e., without reference to the dimension of the Hilbert space on which they act. This makes them particularly useful in the context of device-independent cryptography.

Note that an equivalent definition of MUMs based on complementarity has been introduced earlier in Ref.~\cite{TSWR18}. The authors focus on continuous-variable systems and analyse the properties of MUMs in this setting in subsequent works \cite{PWTR18,RW21,SRTW22}. We also note here that an alternative, inequivalent definition of MUMs was introduced in Ref.~\cite{KG14}. The authors define MUMs through uniform overlaps of the measurements operators, depending on an ``efficiency parameter''. Their definition reduces to MUBs when the value of the efficiency parameter is 1, but the authors show that many MUMs can be constructed with lower efficiency parameter. For a detailed account on the difference between the MUM definition in Ref.~\cite{KG14} and in Ref.~\cite{TFR+21} (which is the one used in the current paper), see Remark~\ref{remark:KG14}.

Despite all their similarities, it has been shown that MUMs are strictly more general than MUBs: there exist pairs of MUMs that are not ``direct sums'' of MUBs, and even pairs of MUMs that cannot be mapped to a pair of MUBs by means of a completely positive unital map \cite{TFR+21}.

In this work, we expand on the study of MUMs. We first provide an improved and complete proof of the characterization of pairs of MUMs that are direct sums of MUBs (\Cref{prop-direct-sum-mub}), originally proposed in Ref.~\cite{TFR+21}. We draw a connection between pairs of MUMs and Hadamard matrices of unitary operators, and further between the latter and quaternionic Hadamard matrices (\Cref{thm-non-directsum}). This enables us to construct novel examples of MUM pairs that are not direct sums of MUBs for new sets of outcome numbers (\Cref{sec-new-mums}). Using the same correspondence, we also present an infinite family of such MUMs with four outcomes. Earlier, only two examples of such MUM pairs were known, for four or five outcomes. Finally, we prove that---in stark contrast with MUBs---there exist arbitrarily large collections of pairwise MUMs for every outcome number (\Cref{thm:no_MUMs}). 

Along with the literature on quaternionic Hadamard matrices, the recipe we provide for the construction of MUM pairs suggests that MUM pairs that are not direct sums of MUBs are more common than one might be led to believe. They are likely not particular to special sets of outcome numbers, and for each outcome number greater than three (and underlying dimension), it is likely that there exist an infinite number of non-equivalent such pairs. For outcome numbers two and three, it has been shown that any pair of MUMs is a direct sum of MUBs \cite{TFR+21}.

In the second part of this work, we turn our attention to superdense coding protocols. The original two-party communication protocol due to Bennett and Wiesner~\cite{BW92-sdc} uses a shared EPR pair and transmits two bits of classical information by sending only \emph{one\/} qubit. This generalizes to~$d$-dimensional protocols for arbitrary~$d \ge 2$ (\Cref{def:sdc-protocol}). These protocols use a maximally entangled state of local dimension~$d$ and transmit an arbitrary message out of~$d^2$ possibilities by sending a~$d$-dimensional quantum state. Superdense coding further generalizes to the transmission of quantum states, and plays an important role in Quantum Shannon Theory (see, e.g.,~\cite[Chapter~6]{W13-qit}).

Given one~$d$-dimensional superdense coding protocol, we may construct other equivalent protocols. For example, the two parties may use a maximally entangled state which is obtained by applying arbitrary unitary operators to the two halves of the shared state. We may also ``mix'' two or more different protocols to construct a seemingly more complicated protocol. Namely, we may use an entangled state of larger dimension to generate a common random string shared by the two parties, and use the shared randomness to select one of several superdense coding protocols to transmit the classical message. Finally, the sender may apply an arbitrary unitary operator, possibly depending on the classical message, to the part of the entangled state that is retained by her. Nayak and Yuen~\cite{NY20-rigidity} recently showed that any two-dimensional superdense coding protocol may be obtained by performing these transformations on the original Bennett-Wiesner protocol. In other words, they showed that the Bennett-Wiesner protocol is \emph{rigid\/}. They further conjectured that a similar rigidity property holds for higher dimensional protocols ($d > 2$), up to the choice of the basic protocol (which uses a maximally entangled state of local dimension~$d$); see \Cref{conj:high-dim-rigidity}.

We disprove the rigidity conjecture due to Nayak and Yuen for infinitely many dimensions~$d \ge 4$. We do this by first showing that any pair of MUMs with~$d$ outcomes defines a~$d$-dimensional superdense coding protocol (\Cref{thm-mum-protocol}). We then prove that the rigidity conjecture implies that the MUMs used in the protocol are a direct sum of MUBs (\Cref{thm-rigidity-direct-sum}). The examples of MUMs that we construct (\Cref{sec-mum-examples,sec-new-mums}) now give us counterexamples for infinite sequences of dimensions~$d$.

The precise power of shared entanglement in quantum communication complexity has been a long-standing open problem (see, e.g., Ref.~\cite{Gavinsky08-entanglement}). Shared entanglement can be used to reduce communication complexity by a factor of two, using superdense coding. It may also be used to generate shared randomness, which in turn can be used to reduce communication complexity of boolean functions on~$n$-bits by an additive~$\log n$ term. It is not known whether it leads to a reduction in the communication complexity of computing functions or relations beyond these two phenomena. The superdense coding protocols arising in the counterexamples above show that shared entanglement may be used in a manner heretofore unknown. In particular, it hints at the possibility of new, counter-intuitive ways in which it might aid communication.

\paragraph{Acknowledgements.}
We are grateful to Henry Yuen for several helpful discussions, and to Curtis Bright for explaining the construction of the perfect sequences in Ref.~\cite{BKG20-perfect-sequences}. This  project  has  received  funding  from  the  European  Union’s  Horizon~2020  research and innovation programme under the Marie Sk\l{}odowska-Curie grant agreement No.~754510. M.F.~acknowledges funding from the Government of Spain (FIS2020-TRANQI, Severo Ochoa CEX2019-000910-S), Fundaci\'o Cellex, Fundaci\'o Mir-Puig and Generalitat de Catalunya (CERCA, AGAUR SGR 1381). J.K.~acknowledges support from the National Science Centre, Poland under the SONATA project ``Fundamental aspects of the quantum set of correlations'' (grant no.~2019/35/D/ST2/02014). A.N.'s research is supported in part by a Discovery Grant from NSERC Canada. 

\section{Mutually unbiased measurements}
\label{sec-mum}

We define the notion of mutually unbiased measurements and describe several properties of relevance to us in \Cref{sec-mum-properties,sec-mum-mub}. We present some explicit examples of MUMs in \Cref{sec-mum-examples}. We then describe a recipe for constructing MUMs based on generalized Hadamard matrices in \Cref{sec-mum-hadamard}. Finally, we describe explicit MUMs based on this construction in \Cref{sec-new-mums}.

\subsection{Definition and basic properties}
\label{sec-mum-properties}

For any positive integer~$n$, let~$[n]$ denote the set~$\set{1, 2, \dotsc, n}$. For any integer~$d \ge 2$, let~$\omega_d \coloneqq \exp \left( \tfrac{2 \pi \complexi}{d} \right)$  be a primitive~$d$th root of unity. Consider the standard basis~$ \set{ \ket{j} : j \in [d]}$ and the Fourier basis~$ \set{ \ket{\chi_j} : j \in [d] } $ for~$\complex^d$, where~$\ket{\chi_j} \eqdef \tfrac{1}{\sqrt{d}} \sum_{i = 1}^d \omega_d^{ij} \ket{i}$. We have~$\size{ \braket{j}{\chi_k}}^2 = \tfrac{1}{d}$ for any~$j,k \in [d]$, and the bases are said to be \emph{mutually unbiased\/}. Formally, we define MUBs in terms of the corresponding measurements.
\begin{definition}
	Two $d$-outcome measurements $\{P_a\}_{a=1}^d$ and $\{Q_b\}_{b=1}^d$ acting on $\bC^d$ correspond to \emph{mutually unbiased bases} (MUBs) if the measurement operators are rank-one orthogonal projections, i.e., $P_{a} = \ketbraq{u_{a}}$ and $Q_{b} = \ketbraq{v_{b}}$ for some pure states~$\ket{u_a}, \ket{v_b} \in \complex^d$, and
	\begin{equation}
	    \abs{ \braket{u_a}{v_b} }^2 \quad = \quad \frac1d \qquad \forall a,b \in [d] \enspace.
	\end{equation}
\end{definition}
Since~$\sum_a P_a = \id$, we see that~$\set{ \ket{u_a} }$ and~$\set{ \ket{v_b} }$ are orthonormal bases for~$\complex^d$. 

We consider a generalization of MUBs, called MUMs \cite{TFR+21}. The definition of MUMs is ``device-independent'' in the sense that it does not refer to the dimension of the Hilbert space on which they act.
\begin{definition}\label{def:MUM}
	Two $d$-outcome measurements $\{P_a\}_{a=1}^d$ and $\{Q_b\}_{b=1}^d$ acting on a Hilbert space $\cH$ are called \emph{mutually unbiased measurements} (MUMs) if
	\begin{equation}\label{eq:MUM_app}
		P_{a} \quad = \quad dP_{a} Q_{b} P_{a} \qquad \text{and}  \qquad  Q_{b} \quad = \quad dQ_{b} P_{a} Q_{b} \enspace,
	\end{equation}
	for all $a$ and $b$ in~$[d]$.
\end{definition}
We may verify readily that every pair of MUBs is also a pair of MUMs. 

\begin{remark}\label{remark:KG14}
We note here that a different notion of MUMs was introduced by Kalev and Gour in Ref.~\cite{KG14}. Their definition is different from Definition~\ref{def:MUM} in a few ways: they fix the Hilbert space dimension as well as the number of measurement outcomes (albeit these can be different). They then require the trace of each operator to be 1, and all of the traces of the form $\tr(P_a Q_b)$ to be uniform, dependent only on an ``efficiency parameter'' (at the same time, the traces $\tr(P_a P_b)$ and $\tr(Q_a Q_b)$ depend only on whether $a=b$, and the efficiency parameter). The measurements of Kalev and Gour are projective only if the efficiency parameter is 1, in which case the measurements are also MUBs. As one can verify that MUMs in Definition~\ref{def:MUM} are always projective, our MUM definition is different from that of Kalev and Gour whenever the measurements do not correspond to MUBs. As such, in the following we will use the term MUM exclusively in the sense of Definition~\ref{def:MUM}.
\end{remark}

Given~$d$-outcome MUMs, we can construct~$d'$-outcome MUMs for any multiple of~$d$ by taking a tensor product with the elements of a suitable MUB.
\begin{lemma}
\label{lem-multiple-dim-mum}
Let~$\{P_a\}_{a=1}^d$ and $\{Q_b\}_{b=1}^d$ be a pair of $d$-outcome MUMs on Hilbert space~$\cH$. Then for any integer~$\ell > 1$, there is a pair of~$d'$-outcome MUMs on Hilbert space~$\cH \tensor \complex^\ell $, where~$d' \eqdef \ell d$.
\end{lemma}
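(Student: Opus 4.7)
The plan is to construct the new MUMs by tensoring the given MUMs on $\cH$ with a pair of MUBs on $\complex^\ell$. Specifically, on $\complex^\ell$, take $R_i \eqdef \ketbraq{i}$ for $i \in [\ell]$ (the standard basis measurement) and $S_j \eqdef \ketbraq{\chi_j}$ for $j \in [\ell]$ (the Fourier basis measurement). Since the standard and Fourier bases of $\complex^\ell$ are mutually unbiased, $\{R_i\}$ and $\{S_j\}$ form a pair of MUBs on $\complex^\ell$, and therefore satisfy the MUM relations with parameter $\ell$.

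On $\cH \tensor \complex^\ell$ define, for $(a,i) \in [d] \times [\ell]$ and $(b,j) \in [d] \times [\ell]$,
\begin{equation*}
    P'_{(a,i)} \quad \eqdef \quad P_a \tensor R_i \qquad \text{and} \qquad Q'_{(b,j)} \quad \eqdef \quad Q_b \tensor S_j \enspace.
\end{equation*}
After fixing a bijection between $[d] \times [\ell]$ and $[d']$, these are candidate $d'$-outcome measurements. I would first check that they sum to the identity: $\sum_{a,i} P'_{(a,i)} = (\sum_a P_a) \tensor (\sum_i R_i) = \id_{\cH} \tensor \id_{\complex^\ell}$, and similarly for $\{Q'_{(b,j)}\}$; positivity is immediate from the tensor product of positive operators.

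Next I would verify the MUM conditions of \Cref{def:MUM} with parameter $d' = \ell d$. Using the mixed-product property of the tensor product,
\begin{equation*}
    P'_{(a,i)} Q'_{(b,j)} P'_{(a,i)} \quad = \quad (P_a Q_b P_a) \tensor (R_i S_j R_i) \quad = \quad \tfrac{1}{d} P_a \tensor \tfrac{1}{\ell} R_i \quad = \quad \tfrac{1}{d'} P'_{(a,i)} \enspace,
\end{equation*}
where the second equality uses the MUM relation for $(P_a, Q_b)$ with parameter $d$ and the MUB/MUM relation for $(R_i, S_j)$ with parameter $\ell$. The symmetric identity $Q'_{(b,j)} P'_{(a,i)} Q'_{(b,j)} = \tfrac{1}{d'} Q'_{(b,j)}$ follows in the same way. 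This establishes that $\{P'_{(a,i)}\}$ and $\{Q'_{(b,j)}\}$ form a pair of $d'$-outcome MUMs on $\cH \tensor \complex^\ell$.

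There is no substantial obstacle here: once one commits to the tensor-product construction with any fixed pair of MUBs on $\complex^\ell$ (which exist for every $\ell \ge 2$), the verification reduces to the mixed-product identity together with the MUM relations on the two factors. The only minor bookkeeping is the identification of $[d] \times [\ell]$ with $[d']$, which is inessential.
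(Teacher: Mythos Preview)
Your proposal is correct and follows exactly the same construction as the paper: tensor the given MUMs with the standard/Fourier MUB pair on $\complex^\ell$ and verify the MUM relations via the mixed-product property. In fact, you spell out the verification that the paper leaves to the reader.
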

\begin{proof}
Define orthogonal projection operators
\begin{align*}
P'_{a,i} & \quad \eqdef \quad P_a \tensor \density{i} \enspace, \qquad \text{and} \\
Q'_{b,j} & \quad \eqdef \quad Q_b \tensor \density{\chi_j} \enspace,
\end{align*}
for~$a,b \in [d]$ and~$i,j \in [\ell ]$, where~$(\ket{\chi_j} : j \in [\ell ])$ is the Fourier basis for~$\complex^\ell $. We may verify that~$\set{P'_{a,i}}$ and $\set{Q'_{b,j}}$ are a pair of~$d'$-outcome MUMs.
\end{proof}

Any pair of (finite dimensional) MUMs can be written as a pair of $d$ orthogonal projections on the tensor product space $\bC^n \otimes \bC^d$ for some $n\in\mathbb{N}$ that satisfy certain \emph{MUM conditions\/}~\cite{TFR+21} (see also Ref.~\cite{NPA12-sdp-relaxations}). I.e., up to a change of basis, the first measurement is simply given by
\begin{equation}\label{eq:Pa}
P_{a}  \quad = \quad  \I \otimes \ketbraq{a} \enspace,
\end{equation}
where $\{ \ket{a} \}_{a = 1}^d$ is the computational basis on $\bC^d$. In the same basis used to express~$P_a$, the second measurement can be described in full generality by
\begin{equation}\label{eq:Qb}
Q_{b}  \quad = \quad  \frac{1}{d} \sum_{j,k \in [d]} V^b_{jk} \otimes \ketbra{j}{k}
\end{equation}
for each~$b \in [d]$, where $V^b_{jk}$ are linear operators on $\bC^n$. The MUM conditions are then given by the following relations in terms of the operators~$V^b_{jk}$~\cite[Supplementary materials, Sec.~II.B.1]{TFR+21}:
\begin{align}
\begin{split}
\label{eq:Xbjk}
V^b_{jj}  & \quad = \quad  \I \qquad \forall b,j \\
(V^b_{jk})\hc  & \quad = \quad  V^b_{kj} \qquad \forall b,j,k \\
V^b_{jk}  & \quad = \quad  V^b_{jl} V^b_{lk} \qquad \forall b,j,k,l \\
\sum_b V^b_{jk}  & \quad = \quad  \updelta_{jk} d \, \I \qquad \forall j,k \enspace.
\end{split}
\end{align}
The last condition corresponds to the completeness relation $\sum_b Q_b = \I$. Notice that the MUM conditions imply that $Q_b$ are projections, and therefore the completeness relation is sufficient to ensure that $\set{Q_b}$ form a valid measurement. While superfluous, the orthogonality constraint $Q_b Q_{b'} = \updelta_{bb'} Q_b$ corresponds to
\begin{equation}\label{eq:orthogonality_X}
\sum_{l} V^b_{jl} V^{b'}_{lk}  \quad = \quad  \updelta_{bb'} d V^b_{jk} \enspace.
\end{equation}

The conditions in Eq.~\eqref{eq:Xbjk} make it possible to describe the pair of MUMs in terms of $d^2$ unitary operators
\begin{equation}
\label{eq-mum-u}
U^b_j \quad \eqdef \quad V^b_{j1}  \qquad \text{for } b,j \in [d] \enspace,
\end{equation}
since $V^b_{jk} = U^b_j ( U^b_k )\hc$ for all $b,j,k$. The MUM conditions in terms of $\set{ U^b_j }$ are given by
\begin{equation}
\label{eq:MUM_U}
\begin{split}
U^b_1  & \quad = \quad  \I \qquad \forall b \\
U^b_j ( U^b_j )\hc  & \quad = \quad  ( U^b_j )\hc U^b_j  \quad = \quad  \I \qquad \forall b,j \\
\sum_b U^b_j ( U^b_k )\hc  & \quad = \quad  \updelta_{jk} d \, \I \qquad \forall j,k \enspace.
\end{split}
\end{equation}
The orthogonality condition in terms of the unitary operators~$\set{ U^b_j }$ is given by
\begin{equation}\label{eq:orthogonality_U}
\sum_j (U^b_j)\hc U^{b'}_j  \quad = \quad  \updelta_{bb'} d \, \I \qquad \forall b, b' \in [d] \enspace.
\end{equation}
Since unitary operators do not commute in general, the order of multiplication of matrices in Eqs.~\eqref{eq:MUM_U} and~\eqref{eq:orthogonality_U} is important.
If in addition to the MUM conditions in Eq.~(\ref{eq:MUM_U}), the operators~$U^b_j$ satisfy
\begin{equation}
\label{eq-canonical-U}
U^1_j  \quad = \quad \id \qquad \forall j \in [d] \enspace,
\end{equation}
we say that the MUMs are in \emph{canonical form\/}. We may convert any pair of MUMs into canonical form by a simple transformation.
\begin{lemma}
\label{lem-canonical-form}
Suppose $\set{P_a}$ and $\set{Q_b}$ are a pair of~$d$-outcome MUMs defined by operators~$(U^b_j)$ satisfying the MUM conditions in Eq.~(\ref{eq:MUM_U}). Let~$U \eqdef \sum_{j = 1}^d (U^1_j)^\adjoint \tensor \density{j}$. Then~$\set{ U P_a U^\adjoint}$ and $\set{ U Q_b U^\adjoint}$ are a pair of~$d$-outcome MUMs in canonical form.
\end{lemma}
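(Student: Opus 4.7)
The plan is to verify two things: first, that $U$ is unitary so conjugation by $U$ makes sense and preserves the MUM conditions; second, that the resulting pair of MUMs satisfies the canonical form requirement~\eqref{eq-canonical-U}. Because the defining equations in~\eqref{eq:MUM_app} are clearly preserved under simultaneous unitary conjugation (both sides are multiplied by $U$ on the left and $U^\adjoint$ on the right), the bulk of the work is bookkeeping the effect of conjugation on the explicit block-form descriptions in Eqs.~\eqref{eq:Pa}--\eqref{eq:Qb}, and then extracting the new unitaries via~\eqref{eq-mum-u}.

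First I would observe that $U = \sum_{j=1}^d (U^1_j)^\adjoint \otimes \ketbraq{j}$ is block-diagonal with respect to the computational basis on the second factor, with unitary blocks $(U^1_j)^\adjoint$ (unitarity of each $U^b_j$ is guaranteed by the second line of~\eqref{eq:MUM_U}). Hence $U$ is itself unitary, with $U^\adjoint = \sum_{j} U^1_j \otimes \ketbraq{j}$. Then I would compute directly
\begin{equation*}
U P_a U^\adjoint \quad = \quad \sum_{j,k} (U^1_j)^\adjoint U^1_k \otimes \ketbraq{j} \cdot (\I \otimes \ketbraq{a}) \cdot \ketbraq{k} \quad = \quad \I \otimes \ketbraq{a} \quad = \quad P_a,
\end{equation*}
so the $P_a$ are unchanged. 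For the second measurement, substituting $V^b_{lm} = U^b_l (U^b_m)^\adjoint$ into~\eqref{eq:Qb} and conjugating by $U$ yields
\begin{equation*}
U Q_b U^\adjoint \quad = \quad \frac{1}{d} \sum_{j,k} (U^1_j)^\adjoint\, U^b_j (U^b_k)^\adjoint\, U^1_k \;\otimes\; \ketbra{j}{k}.
\end{equation*}

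From this I would read off the new ``$V$-operators'' $\tilde V^b_{jk} = (U^1_j)^\adjoint U^b_j (U^b_k)^\adjoint U^1_k$ and the corresponding new unitaries $\tilde U^b_j = \tilde V^b_{j1}$. Using $U^b_1 = \I$ for every $b$ (first line of~\eqref{eq:MUM_U}), one obtains $\tilde U^b_j = (U^1_j)^\adjoint U^b_j$. In particular $\tilde U^1_j = (U^1_j)^\adjoint U^1_j = \I$ for all $j \in [d]$, which is exactly the canonical form condition~\eqref{eq-canonical-U}. Since the MUM conditions are invariant under the unitary change of basis, the transformed measurements automatically satisfy~\eqref{eq:MUM_U} (and equivalently \eqref{eq:MUM_app}), completing the proof.

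There is no real technical obstacle here: the only thing to be careful about is that the operator $U$ is defined using $(U^1_j)^\adjoint$ rather than $U^1_j$, so that $\tilde U^1_j$ contracts to $\I$ rather than to $(U^1_j)^2$; and that one uses the normalization $U^b_1 = \I$ to drop the factor $(U^b_1)^\adjoint U^1_1$ in the expression for $\tilde U^b_j$. The whole argument is otherwise a direct computation.
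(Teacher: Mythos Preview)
Your argument is correct and matches the paper's approach: the paper simply states that conjugation by $U$ replaces $U^b_j$ by $\tU^b_j \eqdef (U^1_j)^\adjoint U^b_j$ and leaves the verification of the MUM conditions and of $\tU^1_j = \id$ to the reader, which is exactly the computation you carry out in detail. If anything, your write-up is more explicit than the paper's, since you also record that $U P_a U^\adjoint = P_a$ and justify why the transformed operators inherit the conditions in Eq.~\eqref{eq:MUM_U}.
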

The transformation essentially replaces the operator~$U^b_j$ by~$\tU^b_j \eqdef (U^1_j)^\adjoint U^b_j$. We leave it to the reader to verify that the operators~$( \tU^b_j )$ satisfy the MUM conditions, and have~$\tU^1_j = \id$ for all~$j \in [d]$. A further generalization of this lemma is useful for the characterization of MUMs.

\begin{proposition}\label{prop:dephasing}
The conditions
\begin{equation}\label{eq:unitary_Hadamard}
\begin{split}
(U^b_j)(U^b_j)^\dagger \quad & \left. = \quad (U^b_j)^\dagger U^b_j \quad = \quad \id \qquad \forall b,j \right. \\
\sum_b U^b_j (U^b_k)^\dagger \quad & \left. = \quad \updelta_{jk} d \, \id \qquad \forall j,k \right. \\
\sum_j (U^b_j)^\dagger U^{b'}_j \quad  & \left. = \quad \updelta_{bb'} d \, \id \qquad \forall b,b' \right.
\end{split}
\end{equation}
are stable under the transformations
\begin{equation}
\begin{split}
U^b_j \quad & \left. \mapsto \quad U^b_j W^b \right. \\
U^b_j \quad & \left. \mapsto \quad W_j U^b_j \right.
\end{split}
\end{equation}
where $W^b$ and $W_j$ are arbitrary unitary matrices.
\end{proposition}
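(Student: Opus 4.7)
The plan is to verify directly that each of the three conditions in Eq.~\eqref{eq:unitary_Hadamard} is preserved under each of the two transformations. The verification is essentially routine: the conditions have a natural bipartite structure in the indices $b$ and $j$, and each transformation acts by a unitary that depends on only one of these indices, so any sum over the matching index produces a cancellation.

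First, consider the right multiplication $U^b_j \mapsto \tU^b_j \eqdef U^b_j W^b$. Unitarity of $\tU^b_j$ is immediate from unitarity of $U^b_j$ and $W^b$. For the second condition, $\sum_b \tU^b_j ( \tU^b_k )^\dagger = \sum_b U^b_j W^b (W^b)^\dagger (U^b_k)^\dagger = \sum_b U^b_j (U^b_k)^\dagger = \updelta_{jk} d\,\id$, since the factor $W^b (W^b)^\dagger = \id$ drops out inside each term of the sum. For the third condition, $\sum_j ( \tU^b_j )^\dagger \tU^{b'}_j = (W^b)^\dagger \bigl[ \sum_j (U^b_j)^\dagger U^{b'}_j \bigr] W^{b'} = \updelta_{bb'} d\, (W^b)^\dagger W^{b'}$, which equals $\updelta_{bb'} d \, \id$ (when $b \ne b'$ it is already zero, and when $b = b'$ the outer factor collapses).

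Second, consider the left multiplication $U^b_j \mapsto \tU^b_j \eqdef W_j U^b_j$. Unitarity is again immediate. Now it is the third condition that benefits from cancellation: $\sum_j ( \tU^b_j )^\dagger \tU^{b'}_j = \sum_j (U^b_j)^\dagger W_j^\dagger W_j U^{b'}_j = \sum_j (U^b_j)^\dagger U^{b'}_j = \updelta_{bb'} d\,\id$. The second condition factors out cleanly as $\sum_b \tU^b_j ( \tU^b_k )^\dagger = W_j \bigl[ \sum_b U^b_j (U^b_k)^\dagger \bigr] W_k^\dagger = \updelta_{jk} d\, W_j W_k^\dagger = \updelta_{jk} d\,\id$.

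Since each of the six verifications reduces to a single line using only unitarity of the multiplier and the corresponding original identity, there is no real obstacle to overcome. The only point worth flagging is the mnemonic for why the transformations are stated as they are: right multiplication by a $b$-dependent unitary is ``invisible'' to sums over $b$ (and conjugates to a similarity when summed over $j$), while left multiplication by a $j$-dependent unitary is ``invisible'' to sums over $j$ (and conjugates to a similarity when summed over $b$). Hence the transformations in the proposition are precisely the two natural symmetries of the Hadamard-type system~\eqref{eq:unitary_Hadamard}.
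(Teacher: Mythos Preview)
Your proof is correct; each of the six verifications is valid as written. The paper does not actually supply a proof of this proposition (it is stated without proof, as the check is routine), so your direct verification is exactly the argument one would fill in.
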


\subsection{Direct sum property}
\label{sec-mum-mub}

MUMs are strictly more general than MUBs. In particular, not all pairs of MUMs can be written as a \emph{direct sum\/} of MUBs.
\begin{definition}
	We say that two $d$-outcome measurements, $\{P_a\}_{a=1}^d$ and $\{Q_b\}_{b=1}^d$ on $\cH$, are a \emph{direct sum of mutually unbiased bases\/} if $\cH \cong \bigoplus_{j} \bC^{d}$ and $P_{a} = \bigoplus_{j} P_{a}^{j}$, $Q_{b} = \bigoplus_{j} Q_{b}^{j}$, where for every $j$ the pair $\{P^{j}_a\}_{a=1}^d$ and $\{Q^{j}_b\}_{b=1}^d$ are mutually unbiased bases acting on the~$j$th direct summand of~$\cH$.
\end{definition}

When expressed in the canonical form described in \Cref{sec-mum-properties}, it is straightforward to check if the pair of MUMs $\set{P_a}$ and $\set{Q_b}$ are a direct sum of MUBs due to the equivalence below, stated in Proposition~II.6 in the supplementary materials for Ref.~\cite{TFR+21}. The proof therein simply claimed the ``only if'' direction of this characterisation to be ``clear'' and only gave an argument for the ``if'' direction. We provide a complete proof, including that for the ``only if'' direction (i.e, the converse). In fact, the converse crucially relies on the operators~$(U^b_j)$ being in canonical form; it provably does not hold otherwise. This subtlety is fully reflected in the proof we provide. (As we show in \Cref{sec-mum-protocol}, the converse is the direction of relevance to the rigidity of superdense coding protocols.)

\begin{restatable}{proposition}{directsummub}
\label{prop-direct-sum-mub}
Suppose $\set{P_a}$ and $\set{Q_b}$ are a pair of~$d$-outcome MUMs in canonical form defined by operators~$(U^b_j : b,j \in [d])$ satisfying the MUM conditions in Eq.~(\ref{eq:MUM_U}) and the canonical form condition in Eq.~\eqref{eq-canonical-U}.
Then $\set{P_a}$ and $\set{Q_b}$ are a direct sum of MUBs if and only if
\begin{equation}\label{eq:U_comm}
[U^b_j \,, U^{b'}_{j'}] \quad = \quad 0 \qquad \forall b,b',j,j' \enspace.
\end{equation}
\end{restatable}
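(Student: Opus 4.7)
The plan is to exploit the structural form $P_a = \id \otimes \ketbraq{a}$ and $Q_b = \tfrac{1}{d} \sum_{j,k} U^b_j (U^b_k)\hc \otimes \ketbra{j}{k}$ on $\cH = \bC^n \otimes \bC^d$, together with the canonical-form condition $U^1_j = \id$. For the ``if'' direction, suppose all $U^b_j$ mutually commute. Being commuting unitaries, they admit a simultaneous eigenbasis $\set{\ket{\alpha}}_\alpha$ of $\bC^n$, with $U^b_j \ket{\alpha} = \lambda^b_j(\alpha) \ket{\alpha}$ for unimodular phases $\lambda^b_j(\alpha)$. Taking the decomposition $\cH = \bigoplus_\alpha (\bC\ket{\alpha} \otimes \bC^d)$, I would verify that $P_a$ restricts on each block to $\ketbraq{a}$, while a direct expansion of $Q_b$ in the eigenbasis shows that it restricts to the rank-one projector onto $\ket{v_b^\alpha} \eqdef \tfrac{1}{\sqrt{d}} \sum_j \lambda^b_j(\alpha) \ket{j}$; since $\size{\braket{a}{v_b^\alpha}}^2 = 1/d$, each block realises a pair of MUBs.

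For the ``only if'' direction, let $\Pi_j$ denote the orthogonal projector onto the $j$th block in the assumed direct-sum decomposition. The commutation $[\Pi_j, P_a] = 0$ for every $a$, combined with $P_a = \id \otimes \ketbraq{a}$, forces $\Pi_j = \sum_a A^j_a \otimes \ketbraq{a}$ with each $A^j_a$ an orthogonal projector on $\bC^n$. Substituting this expression and the formula for $Q_b$ into $[\Pi_j, Q_b] = 0$ and comparing coefficients of $\ketbra{k}{a}$ yields the intertwining relation $V^b_{ka} A^j_a = A^j_k V^b_{ka}$ for all $b,k,a,j$, where $V^b_{ka} \eqdef U^b_k (U^b_a)\hc$.

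The canonical form now enters crucially: since $U^1_j = \id$ we have $V^1_{ka} = \id$ for every $k,a$, so taking $b = 1$ the intertwining collapses to $A^j_a = A^j_k$, meaning $A^j_a$ is independent of $a$. Writing the common projector as $A^j$, we have $\Pi_j = A^j \otimes \id_{\bC^d}$, whose rank equals $d \cdot \rank(A^j)$; the block dimension $d$ forces $A^j = \ketbraq{\alpha_j}$ to be rank one, and $\sum_j \Pi_j = \id$ guarantees that $\set{\ket{\alpha_j}}_j$ is an orthonormal basis of $\bC^n$. The surviving intertwining relation $V^b_{ka} \ketbraq{\alpha_j} = \ketbraq{\alpha_j} V^b_{ka}$ then says every $\ket{\alpha_j}$ is an eigenvector of every $V^b_{ka}$, so the family $\set{V^b_{ka}}$ is simultaneously diagonal in this basis; in particular the $U^b_j = V^b_{j1}$ mutually commute.

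The main obstacle is the converse direction, and specifically the step that collapses $A^j_a$ to an $a$-independent rank-one projector: without the canonical-form hypothesis $V^1_{ka} = \id$, one only obtains an intertwining of possibly distinct projectors, which is too weak to force a common eigenbasis on the $V^b_{ka}$'s. The canonical form is exactly what converts these intertwinings into commutation with a single rank-one projector per block, and hence into simultaneous diagonalisation of all the $U^b_j$. The ``if'' direction is by contrast essentially a matter of packaging the simultaneous eigendecomposition back into the tensor-product form.
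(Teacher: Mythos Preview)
Your proposal is correct and follows essentially the same approach as the paper's proof: both directions hinge on the same ideas, namely simultaneous diagonalisation of the commuting $U^b_j$ for the ``if'' part, and, for the converse, writing the block projectors as $\sum_a A^j_a\otimes\ketbraq{a}$ via commutation with $P_a$, using the canonical form (through $Q_1$, equivalently $V^1_{ka}=\id$) to force $A^j_a$ to be $a$-independent and hence rank one, and then deducing that all $U^b_j(U^b_k)\hc$ are diagonal in the resulting basis. The only cosmetic difference is that you package the commutation with $Q_b$ as a single intertwining relation $V^b_{ka}A^j_a=A^j_kV^b_{ka}$ and specialise to $b=1$, whereas the paper first uses $[\,S_t,Q_1\,]=0$ and then the identity $Q_b=\sum_t S_tQ_bS_t$; these are equivalent organisations of the same computation.
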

\begin{proof}
Suppose the MUMs~$\set{P_a}$ and~$\set{Q_b}$ are defined by the operators~$( U^b_j : b,j \in [d])$ in canonical form. I.e., we fix a choice of basis in which the operators~$P_a$ are expressed as
\begin{equation*}
P_a \quad = \quad \id \otimes \ketbraq{a} \enspace,
\end{equation*}
where the second Hilbert space factor is isomorphic to $\bC^d$ and $\set{\ket{a}}$ is an orthonormal basis for it. Furthermore, the $Q_b$ operators are expressed as
\begin{equation*}
Q_b \quad = \quad \frac{1}{d} \sum_{j,k \in [d]} U^b_j (U^b_k)\hc \otimes \ketbra{j}{k} \enspace,
\end{equation*}
where the operators~$U^b_j$ satisfy the conditions in Eq.~\eqref{eq:MUM_U} and Eq.~\eqref{eq-canonical-U}. Suppose, without loss of generality, that the Hilbert space on which the MUMs are defined is isomorphic to~$\cH \eqdef \complex^{d'} \tensor \complex^d$ in the basis used for the canonical form, for some positive integer~$d'$.

We start with the ``if'' direction of the characterization. Suppose the operators~$U^b_j$ all commute with each other. Then there is an orthonormal basis~$\set{ \ket{e_t} : t \in [d']}$ for~$\complex^{d'}$ in which the operators~$ U^b_j$ are simultaneously diagonal. I.e.,
\[
U^b_j \quad = \quad \sum_{t = 1}^{d'} \lambda^b_{jt} \; \density{e_t} \enspace,
\]
for some unit complex numbers~$\lambda^b_{jt}$, for all~$b,j \in [d]$. Consider the direct sum decomposition
\[
\cH \quad = \quad \bigoplus_{t = 1}^{d'} \density{e_t} \tensor \complex^d \enspace,
\]
and the orthogonal projection operators~$R_t \eqdef \density{e_t} \tensor \id$ into the~$t$-th direct summand, for~$t \in [d']$. Define~$P_a^t \eqdef R_t P_a R_t = \density{e_t} \tensor \density{a}$, and
\[
Q_b^t \quad \eqdef \quad R_t Q_b R_t \quad = \quad \density{e_t} \tensor \; \frac{1}{d} \sum_{jk} \lambda^b_{jt} \overline{\lambda^b_{kt}} \; \ketbra{j}{k} \enspace. 
\]
We have~$P_a = \bigoplus_{t \in [d']} P_a^t$ and~$Q_b = \bigoplus_{t \in [d']} Q_b^t$.
Moreover,~$P_a^t, Q_b^t$ are rank 1 projection operators and~$\trace(P_a^t Q_b^t) = 1/d$, for all~$t,a,b$. So the pair of MUMs~$\set{P_a}$ and~$\set{Q_b}$ are a direct sum of MUBs.

To prove the ``only if'' direction, suppose that~$\set{P_a}$ and~$\set{Q_b}$ are a direct sum of MUBs. Then~$\cH$ is a direct sum of~$d'$ subspaces~$\cH_t$ ($t \in [d']$), and each subspace~$\cH_t$ has dimension~$d$. Let~$S_t$ be the orthogonal projection operator onto~$\cH_t$. 

By the direct sum property, we have
\begin{align*}
P_a \quad & = \quad \sum_t S_t P_a S_t \enspace, \qquad \text{and} \\
Q_b \quad & = \quad \sum_t S_t Q_b S_t \enspace,
\end{align*}
for all~$a,b$. So~$S_t P_a = S_t P_a S_t = P_a S_t$ and similarly, the operators~$S_t$ and~$Q_b$ also commute. In particular, $S_t$ and~$P_a$ are simultaneously diagonalisable. Since the eigenvectors of~$P_a$ are all of the form~$\ket{v} \ket{a}$ for some~$\ket{v} \in \complex^{d'}$, we may express~$S_t$ as~$S_t = \sum_a S_{ta} \tensor \density{a}$, for some orthogonal projection operators~$S_{ta}$. 

Since the MUMs are in canonical form, we have~$U^1_j = \id$ for all~$j$. So~$Q_1 = \id \tensor (1/d) \sum_{j,k} \ketbra{j}{k}$. From the commutation of~$S_t$ and~$Q_1$ we have
\[
\frac{1}{d} \sum_{j,k} S_{tj} \tensor \ketbra{j}{k} 
    \quad = \quad S_t Q_1 
    \quad = \quad Q_1 S_t
    \quad = \quad \frac{1}{d} \sum_{j,k} S_{tk} \tensor \ketbra{j}{k} \enspace.
\]
So~$S_{tj} = S_{tk}$ for all~$j,k$. Let~$S_t' \eqdef S_{t1}$. Then~$S_t = S_t' \tensor \id$, and~$\sum_t S_t' = \id$. 

Since~$\set{S_t P_a S_t : a \in [d]}$ and~$\set{S_t Q_b S_t : b \in [d]}$ are MUBs for every~$t$, these projection operators are all rank~$1$. As~$S_t P_a S_t = S_t' \tensor \density{a}$, we infer that~$\set{S_t'}$ are also rank~$1$ and we have~$S_t' = \density{f_t}$ for some orthonormal basis~$\set{\ket{f_t}}$ for~$\complex^{d'}$. Since~$Q_b = \sum_t S_t Q_b S_t$, we have
\[
\frac{1}{d} \sum_{j,k \in [d]} U^b_j (U^b_k)\hc \otimes \ketbra{j}{k} 
    \quad = \quad \frac{1}{d} \sum_{j,k \in [d]}  \sum_{t \in [d']} 
        \bra{f_t} U^b_j (U^b_k)\hc \ket{f_t} \; \density{f_t} \otimes \ketbra{j}{k} \enspace.
\]
Thus
\[
U^b_j (U^b_k)\hc \quad = \quad \sum_{t \in [d']} 
        \bra{f_t} U^b_j (U^b_k)\hc \ket{f_t} \; \density{f_t}
\]
for every~$b,j,k \in [d]$. Taking~$k = 1$, and noting that~$U^b_1 = \id$ for all~$b$, we see that all the operators~$U^b_j$ are diagonal in the basis~$\set{\ket{f_t}}$. Thus, they all commute.
\end{proof}

We present examples of MUMs in the next section which violate the commutation relations in Eq.~\eqref{eq:U_comm}, and therefore cannot be expressed as a direct sum of MUBs.

\subsection{Examples of MUMs}
\label{sec-mum-examples}

In this section, we provide explicit MUM constructions for $d=4,5,6$ that are not direct sums of MUBs. The examples for~$d = 4,5$ were first reported in Ref.~\cite[Sec.~II.C.2, Supplementary materials]{TFR+21}, essentially in terms of the operators~$V^b_{jk}$. The example for~$d = 6$ is new.

We provide these constructions in terms of the unitary operators~$U^b_j$ satisfying the MUM conditions in Eq.~(\ref{eq:MUM_U}) as well as Eq.~(\ref{eq-canonical-U}), so that the MUMs are in canonical form. We leave it to the reader to verify that the conditions are satisfied. In all the constructions, the Hilbert space is $\bC^2 \otimes \bC^d$, and we use the single-qubit Pauli matrices $\id$, $X$, $Y$ and $Z$:
\[
	\id \eqdef \begin{pmatrix} 1 & 0 \\ 0 & 1 \end{pmatrix} \enspace, \qquad \PauliX \eqdef \begin{pmatrix} 0 & 1 \\ 1 & 0 \end{pmatrix} \enspace, \qquad  \PauliY \eqdef \begin{pmatrix} 0 & - \complexi \\ \complexi & 0 \end{pmatrix} \enspace, \text{ and} \qquad  \PauliZ \eqdef \begin{pmatrix} 1 & 0 \\ 0 & -1 \end{pmatrix} \;.
\]
We provide the unitary operators~$U^b_j$ in one block matrix $H_d$, whose $(b,j)$-block corresponds to~$U^b_j$, i.e., $H_d \eqdef \sum_{b,j \in [d]} \ketbra{b}{j} \otimes U^b_j$, where $( \ket{b} : b \in [d])$ is the standard basis of $\bC^d$.
The MUMs for~$d = 4$ are given by
\begin{equation}\label{eq:d4}
H_4 \quad \eqdef  \quad
\begin{pmatrix}
\I & \I & \I & \I \\
\I & \frac{2\ii}{3} (Z-Y) - \frac{1}{3} \I & \frac{2\ii}{3} (X-Z) - \frac{1}{3} \I & \frac{2\ii}{3} (Y-X) - \frac{1}{3} \I \\
\I & \ii Y & -\I & -\ii Y \\
\I & -\frac23 \I - \frac{\ii}{3} (2Z + Y) & \frac13 \I +\frac{2\ii}{3} (Z - X) & -\frac{2}{3} \I + \frac{\ii}{3}(2X + Y)
\end{pmatrix} \enspace.
\end{equation}
%
%
The MUMs for~$d = 5$ are given by
%
\begin{equation}\label{eq:d5}
H_5 \quad \eqdef  \quad
\begin{pmatrix}
\I & \I & \I & \I & \I \\
\I & -\I & -\ii cY -\ii sX & -\ii cY + \ii sX & -\ii Y \\
\I & -\ii cY - \ii sX & -\I & -\ii Y & -\ii cY + \ii sX \\
\I & -\ii cY + \ii sX & -\ii Y & -\I & -\ii cY - \ii sX \\
\I & -\ii Y & -\ii cY + \ii sX & -\ii cY - \ii sX & -\I 
\end{pmatrix} \enspace,
\end{equation}
where $s \eqdef \sin( \frac{2\pi}{3} )$ and $c \eqdef \cos( \frac{2\pi}{3} )$.
%
%
The new example of MUMs, for~$d = 6$, is given by
%
\begin{equation}\label{eq:d6}
H_6 \quad \eqdef  \quad
\begin{pmatrix}
\I & \I & \I & \I & \I & \I \\
\I & -\I & -\ii Z & \ii Z & \ii Y & -\ii Y \\
\I & -\ii X & -\I & \ii Y & -\ii Y & \ii X \\
\I & \ii X & \ii Y & -\I & -\ii Y & -\ii X \\
\I & \ii Y & -\ii Y & -\ii Y & -\I & \ii Y \\
\I & -\ii Y & \ii Z & -\ii Z & \ii Y & -\I 
\end{pmatrix} \enspace.
\end{equation}

By \Cref{prop-direct-sum-mub}, checking that these MUMs are not direct sums of MUBs reduces to checking that not all operators~$U^b_j$ commute. We may check this readily.

\Cref{lem-multiple-dim-mum} implies that there are~$d$-outcome MUMs which are not direct sums of MUBs whenever~$d$ is a multiple of~$4, 5, 6$.

\subsection{MUMs from Hadamard matrices}
\label{sec-mum-hadamard}

Here we describe a recipe for constructing MUMs from generalized Hadamard matrices, via representations of suitable associative algebras, in particular of the algebra of quaternions.

In general, MUMs correspond to a collection of unitary matrices satisfying Eq.~\eqref{eq:MUM_U}. If we remove the first condition, namely that~$U^b_1 = \id$ for all $b$, and add the orthogonality condition in Eq.~\eqref{eq:orthogonality_U}, we obtain the conditions in Eq.~\eqref{eq:unitary_Hadamard}. These conditions describe a collection of unitary matrices strongly resembling the orthogonality properties of Hadamard matrices. We call a block matrix with such unitary operators a \textit{Hadamard matrix of unitary operators}.
\begin{definition}
A \emph{Hadamard matrix of unitary operators\/} of size $d$ with block size $k$ is a~$(dk) \times (dk)$ block matrix~$\sum_{b,j = 1}^d \ketbra{b}{j} \tensor U^b_j$ whose blocks $U^b_j$, for~$b,j \in [d]$, are operators on a $k$-dimensional Hilbert space satisfying the relations \eqref{eq:unitary_Hadamard}.
\end{definition}
We recover complex Hadamard matrices by setting the block size~$k$ to $1$ (and real Hadamard matrices by only allowing $U^b_j = \pm1$). The examples presented in Eqs.~\eqref{eq:d4}, \eqref{eq:d5} and \eqref{eq:d6} are all Hadamard matrices of unitary operators with block size~$k=2$. Note that a similar definition has been considered by Banica~\cite{Ban18}, with the additional constraints
\begin{equation*}
\begin{split}
\sum_b (U^b_j)^\dagger U^b_k & \left. \quad = \quad \updelta_{jk} d\, \id \qquad \forall j,k \right. \\
\sum_j U^b_j (U^{b'}_j)^\dagger & \left. \quad = \quad \updelta_{bb'} d\, \id \qquad \forall b,b' \enspace. \right.
\end{split}
\end{equation*}
These conditions impose orthogonality-type conditions for block-rows (and block-columns) of unitary matrices with respect to both possible orders of multiplication of the matrices. 

We draw on a correspondence between quaternions and the single-qubit Pauli operators to give a general construction of Hadamard matrices of unitary operators with block size $k = 2$. Recall that the real algebra of quaternions may be represented as
\begin{equation*}
\bH \quad \eqdef \quad \{ \alpha + \beta \bi + \gamma \bj + \delta \bk ~|~  \alpha, \beta, \gamma, \delta \in \reals \} \enspace,
\end{equation*}
where~$\bi, \bj, \bk$ are the \emph{basic quaternions\/}, and satisfy the relations 
\[
\bi^2 = \bj^2 = -1, \quad \bi \bj = \bk, \quad \bj \bi = -\bk \enspace.
\]
(Note the distinction between the quaternion $\bi$ and the complex imaginary unit $\ii$.)
The algebra is endowed with a conjugation operation which satisfies~$\alpha^\ast = \alpha$ for~$\alpha \in \reals$, $\bi^\ast = -\bi$, $\bj^\ast = -\bj$ (and hence, $\bk^\ast = -\bk$). Conjugation also distributes over addition, and satisfies~$(\bq_1 \bq_2)^\ast = \bq_2^\ast \bq_1^\ast$. The norm~$\norm{\bq}$ of a quaternion~$\bq$ is given by
\[
\norm{\bq} \quad \eqdef \quad \sqrt{\bq \bq^\ast} \quad = \quad \sqrt{\bq^\ast \bq} \enspace, 
\]
and when~$\bq \neq 0$, its multiplicative inverse~$\bq^{-1}$ is given by~$\bq^{-1} \eqdef \bq^\ast / \norm{\bq}$. The set of quaternions with norm~$1$ are called \emph{unit quaternions\/}. The unit quaternions form a (non-commutative) group under the multiplication operation.

For a~$d \times d$ matrix~$M$ over~$\bH$, define~$M^\dagger$ as its conjugate-transpose, i.e., the matrix given by~$(M^\dagger)_{ij} \eqdef M_{ji}^\ast$ for all~$i,j \in [d]$. 
\begin{definition}
\label{def-q-hadamard}
A~$d \times d$ matrix quaternionic matrix~$M$ is called a \emph{quaternionic Hadamard matrix\/} if all its entries are unit quaternions, i.e., $ \norm{M_{ij}} = 1$ for all~$i,j \in [d]$, and~$M M^\dagger = d \, \id$.

We say that a quaternionic Hadamard matrix~$M$ is \emph{dephased\/} if the elements in its first row and the first column are all~$1$, i.e., $M_{1j} = M_{i1} = 1$ for all~$i,j \in [d]$.
\end{definition}
The condition~$ M M^\dagger = d \, \id$ is equivalent to~$ \sum_{j = 1}^d M_{bj} M_{b' j}^\ast = \updelta_{b b'} d \, \id$ for all~$b, b' \in [d]$. Note that the order of multiplication of matrices in this definition is important, since multiplication in~$\bH$ is not commutative.

Consider the linear map $f: \bH \to \mathcal{L}(\bC^2)$, from quaternions to linear operators on $\bC^2$, defined on the quaternion basis elements as
\begin{equation*}\label{eq:isometry}
f(1) \eqdef \I, \quad f( \bi ) \eqdef \ii X, \quad f( \bj ) \eqdef -\ii Y, \quad f( \bk ) \eqdef \ii Z \enspace.
\end{equation*}
We may verify that~$f$ is a group isomorphism between the quaternion group
\[
\sQ_8 \quad \eqdef \quad \{ \pm 1, \pm \bi, \pm \bj, \pm \bk \}
\]
of~$\bH$ and the subgroup $\sP \eqdef \{ \pm \I, \pm \ii X, \pm \ii Y, \pm \ii Z \}$ of the Pauli group. This isomorphism extends to a group isomorphism between the multiplicative group of unit quaternions and the subgroup of unitary operators in the real algebra generated by the subgroup~$\sP$. In particular, the function~$f$ commutes with conjugation, i.e., $f(\bq^\ast) = (f(\bq))^\dagger$ for all unit~$\bq \in \bH$. This further extends to an algebra isomorphism between~$\bH$ and the real algebra generated by~$\sP$. (The latter algebra is a strict subset of the space of~$2 \times 2$ complex matrices. For example, the Pauli operator~$X$ does not belong to this algebra.) Therefore, through the isomorphism $f$, every quaternionic Hadamard matrix can be mapped to a Hadamard matrix of unitary operators with block size 2. 
\begin{lemma}
\label{lem-qhadamard}
For any~$d \times d$ quaternionic Hadamard matrix~$M$, the~$(2d) \times (2d)$ block matrix~$H$ defined as
\[
H \quad \eqdef \quad \sum_{b,j = 1}^d \ketbra{b}{j} \tensor f( M_{bj}^\ast ) \enspace,
\]
is a Hadamard matrix of unitary operators of size~$d$ with block size~2.
\end{lemma}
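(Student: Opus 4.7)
The plan is to directly verify each of the three conditions in Eq.~\eqref{eq:unitary_Hadamard} for the blocks $U^b_j \eqdef f(M_{bj}^\ast)$, exploiting the fact that $f$ is an algebra isomorphism between $\bH$ and the real $\ast$-algebra generated by $\sP$, and in particular that $f(\bq^\ast) = f(\bq)^\dagger$ and $f(\bq_1 \bq_2) = f(\bq_1) f(\bq_2)$. Since each $M_{bj}$ is a unit quaternion, so is $M_{bj}^\ast$, and therefore $U^b_j = f(M_{bj}^\ast)$ is unitary; this settles the first of the three conditions.

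Next I would derive the needed row/column orthogonality relations on $M$. The defining relation $M M^\dagger = d \, \id$ reads $\sum_j M_{bj} M_{b'j}^\ast = \updelta_{bb'} d$ in $\bH$. Applying $f$ entry-by-entry and using that $f$ respects multiplication and conjugation gives
\begin{equation*}
\sum_j f(M_{bj}) f(M_{b'j}^\ast) \quad = \quad \updelta_{bb'} d \, \id \enspace,
\end{equation*}
which, rewritten in terms of $U^b_j$ via $f(M_{bj}) = f((M_{bj}^\ast)^\ast) = (U^b_j)^\dagger$, yields exactly condition $\sum_j (U^b_j)^\dagger U^{b'}_j = \updelta_{bb'} d \, \id$.

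For the remaining condition $\sum_b U^b_j (U^b_k)^\dagger = \updelta_{jk} d \, \id$, I need the \emph{column} orthogonality $M^\dagger M = d \, \id$. This is where non-commutativity of $\bH$ requires some care; the step I expect to be the main obstacle. I would obtain it by either (a) noting that $d \times d$ matrices over the division ring $\bH$ form a finite-dimensional (associative) algebra in which any right inverse is automatically a left inverse, so $\tfrac{1}{d} M^\dagger$ being a right inverse of $M$ forces it to be a left inverse as well; or, more concretely, (b) extending $f$ to the ring isomorphism $F$ sending the $d \times d$ quaternionic matrix $M$ to the $(2d) \times (2d)$ complex block matrix whose $(b,j)$-block is $f(M_{bj})$. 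This $F$ preserves products and conjugate-transpose, so $M M^\dagger = d \, \id$ becomes $F(M) F(M)^\dagger = d \, \id$ in complex matrices, which immediately implies $F(M)^\dagger F(M) = d \, \id$, and hence $M^\dagger M = d \, \id$ back in $\bH$.

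With column orthogonality in hand, the computation
\begin{equation*}
\sum_b U^b_j (U^b_k)^\dagger \quad = \quad \sum_b f(M_{bj}^\ast) f(M_{bk}) \quad = \quad f\!\left( \sum_b M_{bj}^\ast M_{bk} \right) \quad = \quad f(\updelta_{jk} d) \quad = \quad \updelta_{jk} d \, \id
\end{equation*}
finishes the verification of the three conditions in Eq.~\eqref{eq:unitary_Hadamard}, and hence shows that $H$ is a Hadamard matrix of unitary operators of size $d$ with block size~$2$.
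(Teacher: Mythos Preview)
Your proof is correct and follows essentially the same approach as the paper: the paper verifies unitarity of the blocks, uses the algebra isomorphism~$f$ to turn~$MM^\dagger = d\,\id$ into the third condition of Eq.~\eqref{eq:unitary_Hadamard}, and then---exactly as in your option~(b)---passes to a $(2d)\times(2d)$ complex matrix (the paper calls it $K \eqdef \tfrac{1}{\sqrt d}\sum_{b,j}\ketbra{j}{b}\otimes f(M_{bj}^\ast)$, which is your $F(M^\dagger)/\sqrt d$) so that $K^\dagger K = \id \Rightarrow KK^\dagger = \id$ yields the remaining condition. Your option~(a), invoking that one-sided inverses in $M_d(\bH)$ are two-sided, is a valid alternative the paper does not state.
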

\begin{proof}
Since~$M_{bj}$ is a unit quaternion for every~$b,j \in [d]$, $f(M_{bj}^\ast)$ is a unitary operator. Further, for~$b, b' \in [d]$, we have
\begin{equation}
\label{eq-qH-3}
\sum_{j = 1}^d f( M_{bj}^\ast )^\dagger \, f( M_{b' j}^\ast )
    \quad = \quad \sum_j f( M_{bj} ) \, f( M_{b' j}^\ast )
    \quad = \quad f \! \left( \sum_j  M_{bj} M_{b' j}^\ast \right)
    \quad = \quad \updelta_{b b'} d \, \id \enspace,
\end{equation}
as~$f$ is a unital algebra isomorphism. This is the third property in Eq.~\eqref{eq:unitary_Hadamard}. 
Consider the matrix~$K$ defined as
\[
K \quad \eqdef \quad \frac{1}{\sqrt{d}} \sum_{b,j = 1}^d \ketbra{j}{b} \tensor f( M_{bj}^\ast ) \enspace.
\]
The property in Eq.~\eqref{eq-qH-3} is equivalent to~$K^\dagger K = \id$, i.e., $K$ is unitary. So~$K K^\dagger = \id$, which is equivalent to the second property in Eq.~\eqref{eq:unitary_Hadamard}.
\end{proof}
Furthermore, the canonical form of a pair of MUMs corresponds to a \textit{dephased} quaternionic Hadamard matrix. According to \Cref{prop:dephasing}, this dephasing can be done by multiplying every row of the Hadamard matrix by the conjugate of its first element from the right, and every column by the conjugate of its first element from the left; the resulting matrix is also a Hadamard matrix. Due to the isomorphism~$f$ and \Cref{prop-direct-sum-mub}, the corresponding MUMs are a direct sum of MUBs if and only if all elements in the dephased quaternionic Hadamard matrix commute pairwise. Thus, we get the following correspondence.
\begin{theorem}
\label{thm-non-directsum}
Any~$d \times d$ \emph{dephased\/} quaternionic Hadamard matrix~$M$ with at least one pair of non-commuting elements defines a pair of~$d$-outcome MUMs that are not a direct sum of MUBs.
\end{theorem}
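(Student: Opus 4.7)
The plan is to chain the existing lemmas. Given a dephased quaternionic Hadamard matrix $M$, I would define the candidate unitary operators $U^b_j \eqdef f(M_{bj}^*)$ for $b, j \in [d]$ exactly as in Lemma~\ref{lem-qhadamard}, verify that they satisfy both the MUM conditions in Eq.~\eqref{eq:MUM_U} and the canonical-form condition in Eq.~\eqref{eq-canonical-U}, and then invoke the ``only if'' direction of Proposition~\ref{prop-direct-sum-mub} to conclude that the resulting MUMs cannot be a direct sum of MUBs.

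The first step is to apply Lemma~\ref{lem-qhadamard}, which already guarantees that $\{U^b_j\}$ are unitary and satisfy the two orthogonality relations in Eq.~\eqref{eq:unitary_Hadamard}. The ``dephased'' hypothesis supplies everything else: since $M_{b1} = 1$ for all $b$, we have $U^b_1 = f(1) = \id$, which is the first MUM condition in Eq.~\eqref{eq:MUM_U}; and since $M_{1j} = 1$ for all $j$, we have $U^1_j = f(1) = \id$, which is exactly the canonical-form condition in Eq.~\eqref{eq-canonical-U}. Hence the operators $P_a = \id \otimes \ketbraq{a}$ and $Q_b = \tfrac{1}{d} \sum_{j,k} U^b_j (U^b_k)^\dagger \otimes \ketbra{j}{k}$ assemble into a valid canonical-form pair of $d$-outcome MUMs on $\bC^2 \otimes \bC^d$.

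For the second step I would use the fact, stated in the paragraph preceding Lemma~\ref{lem-qhadamard}, that $f$ extends to an algebra isomorphism between $\bH$ and a real subalgebra of the $2\times 2$ complex matrices, together with the observation that conjugation on $\bH$ is an anti-involution. Chaining these, $U^b_j$ and $U^{b'}_{j'}$ commute if and only if $M_{bj}^*$ and $M_{b'j'}^*$ commute, which in turn holds if and only if $M_{bj}$ and $M_{b'j'}$ commute (applying $*$ to a commutator identity reverses the order of the two factors). The hypothesis that $M$ has a pair of non-commuting entries therefore yields indices $b, j, b', j'$ with $[U^b_j, U^{b'}_{j'}] \neq 0$, and the converse half of Proposition~\ref{prop-direct-sum-mub} forbids the MUMs from being a direct sum of MUBs.

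I expect no serious obstacle here; the proof is essentially a bookkeeping assembly of Lemma~\ref{lem-qhadamard} (giving the MUM orthogonality relations), the dephasing hypothesis (giving the two identity-operator constraints), and the converse of Proposition~\ref{prop-direct-sum-mub} (giving the commutation criterion). The only point worth flagging carefully is that the single ``dephased'' hypothesis supplies both $U^b_1 = \id$ and $U^1_j = \id$ at once—the former being part of the definition of MUMs and the latter being exactly what is needed to put them in canonical form, so that Proposition~\ref{prop-direct-sum-mub} is applicable.
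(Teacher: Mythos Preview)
Your proposal is correct and follows essentially the same route as the paper. The paper's argument, given in the paragraph immediately preceding the theorem, simply notes that a dephased quaternionic Hadamard matrix corresponds via the isomorphism~$f$ to MUMs in canonical form, and then invokes Proposition~\ref{prop-direct-sum-mub}; you spell out the same chain in more detail, correctly observing that the dephasing hypothesis simultaneously yields both $U^b_1=\id$ (the MUM condition) and $U^1_j=\id$ (the canonical-form condition), and that the algebra isomorphism~$f$ together with the anti-involution property of quaternionic conjugation transfers (non-)commutativity between the entries of~$M$ and the operators~$U^b_j$.
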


\subsection{Further examples of MUMs}
\label{sec-new-mums}

Using constructions of quaternionic Hadamard matrices in the literature, we exhibit infinitely many new MUMs. Following \Cref{thm-non-directsum}, we focus on Hadamard matrices which have non-commuting elements after dephasing.

Chterental and {\DJ}okovi{\'c}~\cite{CD08-stochastic-matrices} present two infinite families of~$4 \times 4$ quaternionic Hadamard matrices. The first is a \emph{special family\/} paramatrized by two unit quaternions~$\ba, \bb$, where
\begin{align*}
\ba \quad & \in \quad \set{ \alpha_1 + \alpha_2 \bi : \alpha_1, \alpha_2 \in \reals } \enspace, \qquad \text{and} \\
\bb \quad & \in \quad \set{ \beta_1 + \beta_2 \bj : \beta_1, \beta_2 \in \reals } \enspace.
\end{align*}
The Hadamard matrix corresponding to~$\ba, \bb$ is
\begin{align*}
M_{\ba, \bb} \quad \eqdef \quad
\begin{pmatrix}
1 & 1 & 1 & 1 \\
1 & -1 & \bb & -\bb \\
1 & \ba & \bx & \bz \\
1 & -\ba & \by & \bw
\end{pmatrix} \enspace,
\end{align*}
where
\begin{align*}
\bx \quad & \eqdef \quad - \frac{1}{2} (1 + \ba + \bb - \ba \bb) &
\bz \quad & \eqdef \quad - \frac{1}{2} (1 + \ba - \bb + \ba \bb) \\
\by \quad & \eqdef \quad - \frac{1}{2} (1 - \ba + \bb + \ba \bb) &
\bw \quad & \eqdef \quad - \frac{1}{2} (1 - \ba - \bb - \ba \bb) \enspace.
\end{align*}
Whenever~$\ba$ and~$\bb$ are both not real, they do not commute. The corresponding Hadamard matrix of unitary operators gives us~$4$-outcome MUMs which are not direct sums of MUBs. The second \emph{generic\/} family of~$4 \times 4$ quaternionic Hadamard matrices in Ref.~\cite{CD08-stochastic-matrices} similarly gives us MUMs of this type. (Chterental and {\DJ}okovi{\'c} follow a different convention in the definition of a quaternionic Hadamard matrix~$M$, viz., they require that~$ M^\dagger M = d \, \id$. We take the conjugate-transpose of their constructions to get Hadamard matrices as defined in this article.)

Barrera Acevedo and Dietrich~\cite[Lemma 5]{BAD18} point out a one-to-one correspondence between $d \times d$ circulant quaternionic Hadamard matrices and \emph{perfect sequences\/} of unit quaternions of length $d$.
\begin{definition}
A \emph{perfect sequence\/} of quaternions of length $d$ is a sequence~$Q \eqdef (\bq_0, \bq_1, \dotsc, \bq_{d-1})$ of quaternions such that its periodic $t$-autocorrelation values
\begin{equation}
\mathrm{AC}_Q(t) \quad \eqdef \quad \sum_{l=0}^{d-1} \bq_l \bq^\ast_{(l+t \mod d)}
\end{equation}
satisfy $\mathrm{AC}_Q(t) = 0$ for all $t \neq 0 \mod d$.
\end{definition}
A $d \times d$ circulant quaternionic Hadamard matrix~$M$ can be constructed from a perfect sequence~$(\bq_l)$ of unit quaternions of length $d$ by taking the first row of the matrix to be the sequence, and the rest of the rows to be successive cyclic shifts of the first row. That is, we define~$M_{ij} \eqdef \bq_{(i + j - 2 \mod d)}$ for~$i,j \in [d]$.

Perfect sequences have been found for small prime lengths~$5, 7, 9, 11, 13, 17, 19$, and~$23$ by Kuznetsov~\cite[Example 7.3]{Kuz17}. It turns out that all of these sequences correspond to MUMs that are not direct sums of MUBs. We first demonstrate this on the smallest example of length~$5$.
\begin{example}
\label{ex-ps5}
The perfect sequence of quaternions $(1, \bj, \bj, 1, \bq)$ with
\[
\bq \quad \eqdef \quad \frac{ -1 + \bi - \bj - \bk }{2}
\]
corresponds to a pair of 5-outcome MUMs that are not a direct sum of MUBs. To see this, note that~$\bq$ is a unit quaternion, and consider the corresponding circulant Hadamard matrix
\begin{equation}
\begin{pmatrix}
1 & \bj & \bj & 1 & \bq \\
\bq & 1 & \bj & \bj & 1 \\
1 & \bq & 1 & \bj & \bj \\
\bj & 1 & \bq & 1 & \bj \\
\bj & \bj & 1 & \bq & 1 \\
\end{pmatrix}.
\end{equation}
We dephase the first row, that is, multiply every column by the conjugate of its first element from the left to get
\begin{equation*}
\begin{pmatrix}
1 & 1 & 1 & 1 & 1 \\
\bq & -\bj & 1 & \bj & \bq^\ast \\
1 & - \bj \bq & -\bj & \bj & \bq^\ast \bj \\
\bj & -\bj & -\bj \bq & 1 & \bq^\ast \bj \\
\bj & 1 & -\bj & \bq & \bq^\ast \\
\end{pmatrix}.
\end{equation*}
Lastly, we dephase the first column by multiplying every row by the conjugate of the first element of that row from the right to get
\begin{equation*}
\begin{pmatrix}
1 & 1 & 1 & 1 & 1 \\
1 & -\bj \bq^\ast & \bq^\ast & \bj \bq^\ast & \bq^\ast \bq^\ast \\
1 & - \bj \bq & -\bj & \bj & \bq^\ast \bj \\
1 & -1 & \bj \bq \bj & -\bj & \bq^\ast \\
1 & -\bj & -1 & -\bq \bj & - \bq^\ast \bj \\
\end{pmatrix}.
\end{equation*}
These two transformations preserve the Hadamard property, according to \Cref{prop:dephasing}. 
The result is a quaternionic Hadamard matrix with non-commuting elements. For example, the last two elements of the second row, $\bj \bq^\ast$ and $\bq^\ast \bq^\ast$, do not commute. To see this, note that~$ \bj \bq^\ast$ and~$ \bq^\ast \bq^\ast$ commute iff~$\bj$ and~$\bq^\ast \bq^\ast$ commute. The property follows by noting that the square of a quaternion of the form~$(1/2)(\pm 1 \pm \bi \pm \bj \pm \bk)$ is also of the same form.

By \Cref{lem-qhadamard}, we obtain a Hadamard matrix of unitary operators that are in canonical form. By the properties of~$f$, some of these unitary operators do not non-commute. Hence, we get a pair of~$5$-outcome MUMs that are not a direct sum of MUBs. 
\end{example}

A similar argument shows that all the examples in Ref.~\cite[Example 7.3]{Kuz17} give rise to MUMs that are not direct sums of MUBs. In particular, all of the examples are perfect sequences of the form $(1, \bj, \ldots, \bj, 1, \bq)$ where 
\[
\bq \quad \in \quad \set{ \frac{ \pm 1 \pm \bi \pm \bj \pm \bk }{2} } \enspace.
\]
The elements in the sequence that we have left unspecified all belong to the quaternion group~$\sQ_8$. Quaternions~$\bq$ as above all have unit norm. The corresponding circulant matrix and its dephasing gives us
\begin{equation}\label{eq:q_Hadamard_gen}
\begin{pmatrix}
1 & \bj & \ldots & \bj & 1 & \bq \\
\bq & 1 & \bj & \ldots & \bj & 1 \\
\vdots & \vdots & \vdots & \vdots & \vdots & \vdots 
\end{pmatrix}
\to
\begin{pmatrix}
1 & 1 & \ldots & 1 & 1 & 1 \\
\bq & -\bj & \ldots & \ldots & \bj & \bq^\ast \\
\vdots & \vdots & \vdots & \vdots & \vdots & \vdots 
\end{pmatrix}
\to
\begin{pmatrix}
1 & 1 & \ldots & 1 & 1 & 1 \\
1 & -\bj \bq^\ast & \ldots & \ldots & \bj \bq^\ast & \bq^\ast \bq^\ast \\
\vdots & \vdots & \vdots & \vdots & \vdots & \vdots 
\end{pmatrix}.
\end{equation}
We have~$\bq^\ast \in \{ (\pm 1 \pm \bi \pm \bj \pm \bk)/2 \}$, and thus $\bj \bq^\ast$ and $\bq^\ast \bq^\ast$ do not commute, by the same reasoning as in \Cref{ex-ps5}. Therefore, the dephased quaternionic Hadamard matrix in Eq.~\eqref{eq:q_Hadamard_gen} defines a pair of MUMs that are not a direct sum of MUBs. From the examples in Ref.~\cite{Kuz17}, \Cref{thm-non-directsum}, and \Cref{lem-multiple-dim-mum} we thus get MUMs that are not direct sums of MUBs whenever the outcome number $d$ is a multiple of~$5, 7, 9, 11, 13, 17, 19$, or~$23$.

Bright, Kotsireas, and Ganesh~\cite{BKG20-perfect-sequences} construct perfect quaternion sequences of length~$2^t$ for all~$t \ge 0$. Example~13 in their article explicitly lists the sequences given by their construction for~$t \in [0,7]$. The sequences for~$t = 5,6,7$ all yield dephased Hadamard matrices with non-commuting elements in the second row. All the elements in these sequences belong to~$\sQ_8$. All three sequences have occurrences of each of the three basic quaternions~$\bi, \bj, \bk$ immediately followed by~$\pm 1$, have~$-1$ as the first element, and~$\bi$ or~$1$ as the last element. Thus, the second row of the corresponding dephased Hadamard matrices have occurrences of each of~$\bi, \bj, \bk$. We include the perfect sequence for~$t = 5$ below as an illustration:
\[
(-1,\bi,1,1,\bj,-\bj,-1,-\bk,-1,\bk,-1,\bj,\bj,-1,1,-\bi,-1,-\bi,1,-1,\bj,\bj,-1,\bk,-1,-\bk,-1,-\bj,\bj,1,1,\bi) \enspace.
\]
Their construction yields the following perfect sequence for~$t = 4$, which also gives us a dephased Hadamard matrix with non-commuting elements:
\[
(-1,1,\bi,-\bj,1,\bj,\bi,-1,-1,-1,\bi,\bj,1,-\bj,\bi,1) \enspace.
\]

The construction of perfect sequences of length~$2^t$ described in Theorem~7 of Ref.~\cite{BKG20-perfect-sequences} is recursive, and implies the following relationship between the sequence~$Q_t$ of length~$2^t$ and the sequence~$Q_{t+2}$ of length~$2^{t+2}$, for all~$t \ge 2$. The projection of~$Q_{t+2}$ to the even indices in~$[0, 2^{t+2} - 1]$ equals~$(Q_t, Q_t)$. I.e., if we delete the entries of~$Q_{t+2}$ given by the odd indices, we get~$Q_t$ concatenated with itself. This implies that deleting every second row and column from the upper-left submatrix of the dephased Hadamard matrix~$M_{t+2}$ obtained from~$Q_{t+2}$ gives us the dephased Hadamard matrix~$M_t$ obtained from~$Q_t$. I.e., $(M_t)_{ij} = (M_{t+2})_{2i-1, 2j-1}$ for~$i,j \in [2^t]$. The perfect sequence of length~$2^t$ starting from the sequence for~$t = 4$ or~$t = 5$ listed above thus yields a~$2^t \times 2^t$ dephased quaternionic Hadamard matrix with non-commuting elements for any~$t \ge 4$. These matrices all give us MUMs that are not direct sums of MUBs. The resulting MUMs are different from the ones obtained from~$H_4$ in \Cref{sec-mum-examples}. This provides further evidence for the prevalence of perfect sequences, quaternionic Hadamard matrices, and MUMs that are not direct sums of MUBs.

There are a number of other constructions of quaternionic Hadamard matrices in the literature; see, e.g., Ref.~\cite{BAD19}. It is likely that there be many more families of such MUMs.

\subsection{The number of MUMs}

Given the close connection of MUMs to MUBs, it is natural to ask what the maximal number of $d$-outcome measurements is such that they are pairwise mutually unbiased (i.e., pairwise MUMs). Note that for MUBs, this is a long-standing open problem for composite dimensions \cite{Zau99,Zau11}. The only known generic upper bound for MUBs is $d+1$ in dimension $d$, which is known to be saturated in prime power dimensions \cite{WF89}. In composite dimensions, however, the best known generic lower bound is $p^r+1$, where $p^r$ is the smallest prime power factor in the prime decomposition of the dimension. There is no composite dimension in which the exact number of MUBs is known.

We prove that the situation is drastically different for MUMs: in fact, there exist an unbounded number of MUMs for every outcome number $d$. We show this via a construction using Hilbert spaces of unbounded dimension. We first make use of a result from Ref.~\cite[Proposition B.1]{KSTB+19}. Recall that~$ \omega_d $ is a primitive~$d$th root of unity, and let~$Z_d \eqdef \sum_{j=1}^{d} \omega^j_d \ketbraq{j}$ and $X_d \eqdef \sum_{j=1}^{d} \ketbra{j+1}{j}$ be the generalized Pauli~$Z$ and~$X$ operators, respectively, in dimension $d$. (In this definition, we ``round''~$d+1$ down to~$1$.)

\begin{proposition}[Proposition B.1~\cite{KSTB+19}]\label{prop:XZ}
Let $A_1$ and $A_2$ be unitary operators acting on a finite-dimensional Hilbert space $\cH$, satisfying $A_1^d = A_2^d = \I$ for some integer $d \ge 2$. Suppose that $A_1$ and $A_2$ satisfy the commutation relation
\begin{equation}\label{eq:MUM_comm}
A_1 A_2 \quad = \quad \omega_d A_2 A_1 \enspace.
\end{equation}
Then, $\dim \cH = d \cdot d'$ for some integer $d' \ge 1$ and there exists a unitary operator $U : \cH \to \bC^d \otimes \bC^{d'}$ such that
\begin{equation*}
U A_1 U^\dagger = Z_d \otimes \I_{d'} \qquad \text{and} \qquad U A_2 U^\dagger = X_d \otimes \I_{d'} \enspace.
\end{equation*}
\end{proposition}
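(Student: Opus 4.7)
The plan is to diagonalize $A_1$ and use the commutation relation to organize $\cH$ into a tensor-product structure forced by $A_2$'s cyclic action on the eigenspaces of $A_1$. This is the finite-dimensional analogue of the Stone--von Neumann theorem for the Weyl--Heisenberg commutation relation $Z_d X_d = \omega_d X_d Z_d$.

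First, since $A_1$ is unitary with $A_1^d = \id$, its minimal polynomial divides $x^d - 1$, so $A_1$ is diagonalizable with eigenvalues in $\set{\omega_d^k : k \in [d]}$. Writing $\cH = \bigoplus_{k \in [d]} \cH_k$ with $\cH_k \eqdef \ker(A_1 - \omega_d^k \id)$, the commutation relation gives, for $\ket{v} \in \cH_k$,
\[
A_1 A_2 \ket{v} \quad = \quad \omega_d A_2 A_1 \ket{v} \quad = \quad \omega_d^{k+1} A_2 \ket{v},
\]
so $A_2 \cH_k \subseteq \cH_{k+1}$ with indices taken modulo $d$. Since $A_2$ is unitary, the restriction $A_2|_{\cH_k} : \cH_k \to \cH_{k+1}$ is an isometric embedding; composing $d$ such embeddings yields $A_2^d |_{\cH_k} = \id_{\cH_k}$, which forces each restriction to be a unitary isomorphism. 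Hence every $\cH_k$ is nonzero and all share a common dimension $d' \geq 1$, whence $\dim \cH = d \cdot d'$.

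To construct $U$, I would fix an orthonormal basis $\set{\ket{e_t} : t \in [d']}$ of $\cH_1$ and set $\ket{e_t^{(k)}} \eqdef A_2^{k-1} \ket{e_t}$ for $k \in [d]$ and $t \in [d']$. The cyclic unitary action of $A_2$ ensures that $\set{\ket{e_t^{(k)}} : k \in [d], t \in [d']}$ is an orthonormal basis of $\cH$, with $\ket{e_t^{(k)}} \in \cH_k$. Define $U : \cH \to \bC^d \tensor \bC^{d'}$ by $U \ket{e_t^{(k)}} \eqdef \ket{k} \tensor \ket{t}$. A direct computation yields $U A_1 U^\dagger (\ket{k} \tensor \ket{t}) = \omega_d^k (\ket{k} \tensor \ket{t})$, matching $Z_d \tensor \id_{d'}$; and $U A_2 U^\dagger (\ket{k} \tensor \ket{t}) = \ket{k{+}1 \bmod d} \tensor \ket{t}$ (using $A_2 \ket{e_t^{(d)}} = A_2^d \ket{e_t} = \ket{e_t^{(1)}}$), matching $X_d \tensor \id_{d'}$.

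The main point requiring care is the equal-dimension claim for the eigenspaces, which depends on combining the cyclic action induced by the twisted commutation with the unitarity of $A_2$ to conclude that each $A_2|_{\cH_k}$ is a bijection (and in particular that no eigenspace is trivial). Once this structural fact is in hand, the remainder is essentially bookkeeping with the basis $\set{\ket{e_t^{(k)}}}$ obtained by iterating $A_2$ on a fixed basis of $\cH_1$.
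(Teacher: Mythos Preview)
The paper does not give its own proof of this proposition: it is quoted verbatim as Proposition~B.1 from Ref.~\cite{KSTB+19} and used as a black box in the proof of \Cref{thm:no_MUMs}. So there is no in-paper argument to compare against.

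That said, your proof is correct and is the standard one for this Stone--von~Neumann--type statement. The only place where the write-up could be tightened is the ``hence every $\cH_k$ is nonzero'' clause: the cleanest way to see equal dimensions is simply to note that each $A_2|_{\cH_k}:\cH_k\to\cH_{k+1}$ is an isometric embedding, so $\dim\cH_k\le\dim\cH_{k+1}$ for all $k$, and chaining these around the cycle forces equality; nonvanishing then follows because $\cH\neq 0$ implies some $\cH_k\neq 0$. Your $A_2^d=\id$ argument also works, but the dimension-chain phrasing avoids the slightly awkward inference from ``composition is identity'' to ``each factor is bijective''. The basis construction and the verification of $UA_iU^\dagger$ are fine, and the indexing matches the paper's conventions $Z_d=\sum_j\omega_d^j\ketbraq{j}$ and $X_d=\sum_j\ketbra{j+1}{j}$.
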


Now we are in the position to state the result on the number of MUMs.

\begin{theorem}\label{thm:no_MUMs}
For any $d,n \in \mathbb{N}$ such that $d,n \ge 2$ there exist $n$ MUMs with outcome number $d$.
\end{theorem}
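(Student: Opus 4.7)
The plan is to construct, for each $n \ge 2$, a family of $n$ unitary operators $A_1, \dots, A_n$ acting on a common finite-dimensional Hilbert space $\cH$, each satisfying $A_i^d = \id$, and pairwise satisfying a Weyl--Heisenberg relation $A_j A_i = \omega_d A_i A_j$ for $i<j$. The spectral projectors of each $A_i$ will then supply a $d$-outcome projective measurement, and \Cref{prop:XZ} will show that any two such measurements are unitarily equivalent to a direct sum of the standard pair of MUBs on $\bC^d$, which in particular makes them a pair of MUMs.

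For the construction I would take $\cH \eqdef (\bC^d)^{\otimes N}$ with $N \eqdef \binom{n}{2}$, whose tensor factors are labelled by the two-element subsets $\{i,j\} \subseteq [n]$. For each $i \in [n]$, define $A_i$ to act as $X_d$ on the factor indexed by $\{i,j\}$ whenever $i<j$, as $Z_d$ on the factor indexed by $\{i,j\}$ whenever $i>j$, and as $\id$ on every factor not containing $i$. Since $X_d^d = Z_d^d = \id$ we immediately get $A_i^d = \id$, and for $i<j$ the only tensor factor on which both $A_i$ and $A_j$ act nontrivially is the $\{i,j\}$-factor, where they act as $X_d$ and $Z_d$ respectively. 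The identity $Z_d X_d = \omega_d X_d Z_d$ then yields $A_j A_i = \omega_d A_i A_j$, as desired.

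Next, for each $i$ let $\{P_i^a\}_{a=1}^d$ be the spectral projectors of $A_i$, with $P_i^a$ projecting onto the $\omega_d^a$-eigenspace. The Weyl relation with any other $A_j$ forces conjugation by $A_j$ to cyclically permute these eigenspaces, so each has the same nonzero dimension and $\{P_i^a\}_a$ is a genuine $d$-outcome PVM. To show that $\{P_i^a\}$ and $\{P_j^b\}$ are MUMs for $i \neq j$, apply \Cref{prop:XZ} to the pair $(A_j, A_i)$ when $i<j$: it furnishes a unitary $U$ with $U A_j U\hc = Z_d \otimes \id_{d'}$ and $U A_i U\hc = X_d \otimes \id_{d'}$, whose spectral measurements are, up to conjugation by $U$, a direct sum with multiplicity $d'$ of the standard computational and Fourier bases of $\bC^d$. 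The latter form a pair of MUBs, so $\{P_i^a\}$ and $\{P_j^b\}$ are unitarily equivalent to a direct sum of MUBs, and in particular satisfy \Cref{def:MUM}.

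The hardest part of the argument is really just organizing the tensor-factor bookkeeping in the construction; the MUM property itself drops out of \Cref{prop:XZ}. As a self-contained alternative that avoids \Cref{prop:XZ}, one can derive the MUM identity $d\, P_i^a P_j^b P_i^a = P_i^a$ directly from the Weyl relation: expanding $P_j^b = \tfrac{1}{d}\sum_{m=0}^{d-1} \omega_d^{-mb} A_j^m$ and using $A_j^m P_i^a (A_j^m)\hc = P_i^{a-m}$ shows that $P_i^a A_j^m P_i^a = \updelta_{m,0}\, P_i^a$, from which the MUM identity and its symmetric counterpart follow at once.
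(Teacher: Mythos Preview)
Your proof is correct and follows the same strategy as the paper: build unitaries $A_1,\dots,A_n$ with $A_i^d=\id$ and pairwise Weyl relations $A_jA_i=\omega_d A_iA_j$, then invoke \Cref{prop:XZ} to conclude that each pair yields MUMs. The only difference is that the paper's construction is more economical---it takes $A_j \eqdef X_d^{\otimes(j-1)}\otimes Z_d\otimes \id_d^{\otimes(n-j)}$ on $(\bC^d)^{\otimes n}$, using $n$ tensor factors rather than your $\binom{n}{2}$---but the idea is identical, and your self-contained verification of the MUM identity via $P_i^a A_j^m P_i^a=\updelta_{m,0}P_i^a$ is a nice addition.
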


\begin{proof}
We start with the canonical construction of a pair of MUBs in dimension $d$, via the eigenbases of $Z_d$ and $X_d$. That is, we write these operators in their spectral decomposition,
\begin{equation*}
    Z_d = \sum_{a=1}^{d} \; \omega_d^a \; P_a, \qquad X_d = \sum_{b=1}^{d} \overline{\omega}_d^b \; Q_b \enspace,
\end{equation*}
where~$P_a \eqdef \density{a}$ and~$Q_b \eqdef \density{\chi_b}$, and~$(\ket{\chi_b})$ is the Fourier basis for~$\complex^d$.
Since $\{P_a\}$ and $\{Q_b\}$ are MUBs, they are also MUMs. From Definition \ref{def:MUM}, if $\{P_a\}$ and $\{Q_b\}$ are MUMs then so are $\{P_a \otimes \I\}$ and $\{Q_b \otimes \I\}$, as are $\{ U P_a U^\dagger\}$ and $\{ U Q_b U^\dagger\}$ for an arbitrary unitary operator $U$. It follows from these observations and Proposition \ref{prop:XZ} that two unitary operators~$A_1$ and $A_2$ define a pair of $d$-outcome MUMs through their spectral decomposition if they satisfy $A^d_1 = A^d_2 = \I$ and the commutation relation in Eq.~\eqref{eq:MUM_comm}. Hence, in order to define~$n$ MUMs with~$d$ outcomes, it is sufficient to find $n$ unitary operators $\{A_j : j \in [n] \}$ such that $A^d_j = \I$ for all $j$ and $A_j A_k = \omega_d A_k A_j$ for all $j<k$.

For such a construction, consider the unitary operators $\{A_j\}$ defined on $( \bC^d )^{ \otimes n }$ as
\begin{equation*}
A_j \quad \eqdef \quad \left( \bigotimes_{k = 1}^{j - 1} X_d \right) \otimes Z_d \otimes \left( \bigotimes_{l = j + 1}^n \I_d \right) \enspace,
\end{equation*}
that is,
\begin{equation*}
\begin{split}
A_1 & \left. \quad = \quad Z_d \otimes \I_d \otimes \I_d \otimes \I_d \otimes \cdots \otimes \I_d \right. \\
A_2 & \left. \quad = \quad X_d \otimes Z_d \otimes \I_d \otimes \I_d \otimes \cdots \otimes \I_d \right. \\
A_3 & \left. \quad = \quad X_d \otimes X_d \otimes Z_d \otimes \I_d \otimes \cdots \otimes \I_d \right. \\
 & \left. \quad \vdots \right. \\
A_n & \left. \quad = \quad X_d \otimes X_d \otimes X_d \otimes X_d \otimes \cdots \otimes Z_d \enspace. \right.
\end{split}
\end{equation*}
It is straightforward to verify that these operators satisfy $A_j^d = \I$ for all $j$ and $A_j A_k = \omega_d A_k A_j$ for all $j<k$. Hence, they define $n$ MUMs with outcome number $d$.
\end{proof}

\section{Superdense coding and MUMs}
\label{sec-mum-protocol}

In this section, we describe some connections between MUMs and superdense coding protocols. We present the rigidity conjecture due to Nayak and Yuen in \Cref{sec-rigidity-conjecture}. We show how the encoding operators in optimal protocols can be constructed from any pair of MUMs in \Cref{sec-sdc-from-mum}. Then we prove the connection between rigidity of superdense coding and the expressibility of MUMs as a direct sum of MUBs in \Cref{sec-rigidity-mum}. This yields the counterexamples to the conjecture claimed in \Cref{sec-intro}.

\subsection{The rigidity conjecture}
\label{sec-rigidity-conjecture}

In a~$d$-dimensional superdense coding protocol, two parties, Alice and Bob, share a quantum state~$\tau$ on a bipartite Hilbert space~$\cH_A \otimes \cH_B$. We assume without loss of generality that~$\cH_{A}$ factors as~$\cH_{A'} \otimes \cH_{A''}$, where $\cH_{A''}$ is isomorphic to $\complex^d$. Given an input~$i \in [d^2]$, Alice applies a unitary operator $W_i$ (called an \emph{encoding operator\/}) to her share of~$\tau$ (with support in the space~$\cH_A$), and sends the qubits in~$A''$ to Bob. Bob then performs a measurement on the Hilbert space $\cH_{A''} \otimes \cH_{B}$ to determine what the input $i$ is. (Since Bob does not know the input~$i$, the measurement is independent of~$i$.) We may define the protocol formally as follows.
\begin{definition}[Superdense coding protocol]
\label{def:sdc-protocol}
Let $d$ be a positive integer.
Let $\cH_A \eqdef \cH_{A'} \otimes \cH_{A''}$ and~$\cH_B$ be finite dimensional Hilbert spaces where $\cH_{A''}$ is isomorphic to $\complex^d$. 
Let $\tau$ denote a quantum state on $\cH_A \otimes \cH_B$ and let~$(W_i : i \in [d^2])$ denote a sequence of $d^2$ unitary operators acting on $\cH_A$. We say that $(\tau,(W_i))$ is a \emph{$d$-dimensional superdense coding protocol\/} if there exists a measurement $( M_i : i \in [d^2])$ acting on $\cH_{A''} \otimes \cH_B$ such that~$\trace (M_i \, \rho_i) = 1$ for all~$ i \in [d^2]$, where $\rho_i$ denotes the reduced state on registers~$A'' B$, i.e., $\rho_i \eqdef \trace_{A'} [ (W_i \otimes \id) \tau (W_i \otimes \id)^\adjoint ]$. The operators~$(W_i)$ are then called the \emph{encoding operators\/}.
\end{definition}
A canonical protocol for $d$-dimensional superdense coding is as follows. Alice and Bob share the $d$-dimensional maximally entangled state $\ket{\mes_d} \eqdef \frac{1}{\sqrt{d}} \sum_{e=1}^{d} \ket{e} \ket{e}$. Given message~$i \in [d^2]$, Alice applies a unitary operator~$E_i$ to her share of $\ket{\mes_d}$, and sends it over to receiver. The family of unitary operators~$\{ E_i \}$ can be any orthogonal unitary basis for the space of~$d \times d$ complex matrices. (The orthogonality property means that $\trace(E_i^\adjoint E_j) = 0$ if and only if $i \neq j$.) An example of such a basis is the set of Heisenberg-Weyl operators (also called the generalized Pauli operators). The~$d^2$ possible states with which the receiver, Bob, ends up are then mutually orthogonal, and may be distinguished perfectly via a suitable measurement.

Werner~\cite{Werner01-teleportation-dense-coding} described, without proof, how non-equivalent orthogonal unitary bases may be constructed. Nayak and Yuen~\cite{NY20-rigidity} presented explicit constructions of such bases for dimensions three and higher, and proved their non-equivalence. In light of this, and the rigidity of the Bennett-Wiesner protocol that they established, they conjectured the rigidity of superdense coding for all dimensions, up to the choice of an orthogonal unitary basis. In more detail, they defined a notion of \emph{local equivalence\/}, and conjectured that every~$d$-dimensional superdense coding protocol is locally equivalent to one in which the sender measures a part of the shared entangled state to generate a shared random string~$r$, and then runs a protocol using an orthogonal unitary basis depending on~$r$.

More precisely, given an arbitrary protocol $(\tau,(W_i))$ for superdense coding, Nayak and Yuen conjectured that there exist local isometries $V, W$ such that if Alice applies~$V$ and Bob applies~$W$ to their respective shares of~$\tau$, then a maximally entangled state~$\ket{\mes_d}$ is extracted, in tensor product with an auxiliary state $\rho$. Alice then performs a projective measurement~$\{ P_r\}$ on her part of the auxiliary state~$\rho$ to obtain some outcome~$r$. Based on~$r$, Alice applies the~$i$th operator from a unitary orthogonal basis~$\set{E_{r,i}}$ to her half of the maximally entangled state. Finally, after sending her half of the state~$(E_{r,i} \tensor \id) \ket{\mes_d}$, the sender applies some unitary operator~$C_i$ on the remaining qubits in her possession. (This final operation does not affect Bob's reduced state or measurement in any way.) Bob measures his part of~$\rho$ with the same projective measurement~$\{ P_r\}$ as Alice. Using the outcome~$r$, he measures the remaining qubits appropriately to recover the classical message~$i$.

Formally, the rigidity conjecture may be stated as follows. In this statement, for any three operators~$C,D,E$, the notation~$C =_E D$ denotes that~$C E C^\adjoint = D E D^\adjoint$.
\begin{conjecture}[Nayak and Yuen~\cite{NY20-rigidity}]
\label{conj:high-dim-rigidity}
Let $(\tau,(W_i))$ be a $d$-dimensional superdense coding protocol. Then there exist 
\begin{enumerate}
	\item Unitary operators $V$ acting on $\cH_{A'} \otimes \cH_{A''}$ and $(C_i)_{i \in [d^2]}$ acting on $\cH_{A'}\,$,
	\item An isometry $W$ mapping $\cH_B$ to a Hilbert space $\cH_{B'} \otimes \cH_{B''}$ where $\cH_{B''}$ is isomorphic to $\complex^d$,
	\item A density matrix $\rho$ on $\cH_{A'} \otimes \cH_{B'}$,
	\item A set of pairwise orthogonal projectors $\{ P_r \}$ on~$\cH_{A'}\,$ that sum to the identity, and
	\item For every $r$, an orthogonal unitary basis $\{E_{r,i} \}_{i \in [d^2]}$ for the space of $d \times d$ complex matrices,
\end{enumerate}
such that, letting $\tau' \eqdef (V \otimes W) \tau (V \otimes W)^\adjoint$, we have
\[
	\tau' \quad = \quad \rho^{A'B'} \otimes \ketbra{\mes_d}{\mes_d}^{A'' B''}
\]
and for $i \in [d^2]$,
\[
	(C_i^\adjoint \otimes \id) W_i V^\adjoint \quad =_{\tau'} \quad \sum_{r} P_r \otimes E_{r,i}\;.
\]
\end{conjecture}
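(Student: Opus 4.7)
The plan is to analyze the structure that the perfect-recoverability condition forces on an arbitrary $d$-dimensional superdense coding protocol $(\tau,(W_i))$, and then decompose the encoding operators into the block form conjectured. The starting observation is that the $d^2$ encoded reduced states $\rho_i = \trace_{A'}[(W_i \otimes \id)\tau(W_i \otimes \id)^\adjoint]$ on $\cH_{A''} \otimes \cH_B$ must be perfectly distinguished by Bob's measurement, which forces them to have pairwise orthogonal supports. Combined with the fact that each $W_i$ is a unitary acting only on $\cH_A$, this gives a family of trace-orthogonality conditions on the $W_i$ weighted by $\tau$, and these are the basic equations I would exploit.

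First I would bring $\tau$ into a canonical form. Using a Schmidt-type decomposition of $\tau$ relative to Alice's marginal, I would choose an isometry $W$ on Bob's side onto the support of $\trace_A \tau$, and absorb partial-isometric freedom on Alice's side into a unitary $V$ on $\cH_A$. The aim of this step is to identify a maximally entangled ``core'' $\ket{\mes_d}$ between $\cH_{A''}$ and a $d$-dimensional subspace $\cH_{B''} \subseteq W(\cH_B)$, reflecting the fact that Alice's $d$-dimensional register must carry enough quantum information for $d^2$ distinguishable outcomes. The residual weight of $\tau$ on the complement would be collected into a density matrix $\rho$ on $\cH_{A'} \otimes \cH_{B'}$, yielding $\tau' = \rho^{A'B'} \otimes \density{\mes_d}^{A''B''}$.

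Second, I would combine the orthogonality relations with the canonical form of $\tau'$ to analyze the action of $W_i V^\adjoint$ on the core. Writing $W_i V^\adjoint$ in blocks relative to $\cH_{A'} \otimes \cH_{A''}$ and using the isomorphism between pure maximally entangled states and matrix transposition, the orthogonality conditions $\trace[\rho_i \rho_j] = 0$ should reduce to trace-orthogonality conditions on the ``$A''$-blocks'' of $W_i V^\adjoint$ against the $A'$-marginal $\rho_{A'}$ of $\rho$. I would try to diagonalize $\rho_{A'}$ to obtain a projective measurement $\{P_r\}$ on $\cH_{A'}$ such that, block-by-block, the restrictions of $W_i V^\adjoint$ form an orthogonal unitary basis $\{E_{r,i}\}_i$ for $M_d(\complex)$; the $C_i$ would then absorb the remaining Alice-local unitary freedom that does not affect Bob's reduced state.

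The main obstacle is the last step: extracting the block-diagonal ``shared-randomness'' decomposition $\sum_r P_r \otimes E_{r,i}$ from what are a priori arbitrary unitaries $W_i$ satisfying only $d^2(d^2-1)/2$ bilinear orthogonality conditions against the fixed state $\tau'$. There is no obvious reason these conditions should force $W_i V^\adjoint$ to be block-diagonal in some basis of $\cH_{A'}$, and indeed the MUM-to-protocol construction of \Cref{sec-mum-protocol} together with the MUMs built in \Cref{sec-new-mums} from quaternionic Hadamard matrices provide natural candidate counterexamples: a pair of MUMs that is not a direct sum of MUBs should yield encoding operators that resist such a block decomposition, via \Cref{prop-direct-sum-mub}. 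I would therefore expect that any proof along these lines must either use an ingredient special to $d=2$ (where the orthogonal unitary basis is essentially unique up to local unitaries) or else that the conjecture itself fails for $d \geq 4$, in which case the attempted strategy breaks down exactly at this step.
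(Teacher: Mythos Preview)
The statement you are addressing is a \emph{conjecture}, and the paper does not prove it: it \emph{disproves} it for infinitely many dimensions~$d \ge 4$. So there is no ``paper's own proof'' to compare against, and any attempt to establish the conjecture in full generality is doomed from the outset. Your proposal sketches a plausible-sounding strategy for a proof, but the genuine gap is not a missing technical step---it is that the target statement is false.

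To your credit, your final paragraph identifies exactly where the argument breaks down and even points to the right obstruction: the MUM-based protocols of \Cref{sec-sdc-from-mum} combined with MUMs that are not direct sums of MUBs. But you stop short of turning this into an actual disproof. The paper's route is the reverse of yours: rather than trying to push through the block-diagonalization and seeing where it fails, it shows constructively that (i) any pair of $d$-outcome MUMs yields a $d$-dimensional superdense coding protocol (\Cref{thm-mum-protocol}), and (ii) if the conjecture held for that protocol, the underlying MUMs would necessarily be a direct sum of MUBs (\Cref{thm-rigidity-direct-sum}, via \Cref{prop-block-diagonal}). The explicit MUMs in \Cref{sec-mum-examples,sec-new-mums} that are not direct sums of MUBs then furnish concrete counterexamples. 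If you want to salvage your write-up, discard the first two ``canonical form'' steps entirely and instead carry out this contrapositive argument.
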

In other words, any $d$-dimensional superdense coding protocol is locally equivalent to a canonical protocol, in which the ancillary part of the shared state~$\tau$ in register~$A'$ serves as a source of randomness, and for each ``randomness string'' $r$, the encoding operators are given by an orthogonal unitary basis $\{E_{r,i}\}$.

\subsection{Encoding operators from MUMs}
\label{sec-sdc-from-mum}

Consider any pair of~$d$-outcome MUMs, $\{P_a\}_{a = 1}^d$ and $\{Q_b\}_{b = 1}^d$ in the form given by Eqs.~(\ref{eq:Pa}) and~(\ref{eq:Qb}).
By the characterization of MUMs described in \Cref{sec-mum-properties}, the space on which the MUMs act is isomorphic to~$\complex^n \tensor \complex^d$ for some positive integer~$n$. Define the maximally entangled state on $(\bC^n \tensor \complex^d)^{\tensor 2}$, corresponding to registers~$A$ and~$B$, as
\begin{equation*}
\ket{\mes^+} \quad \eqdef \quad \frac{1}{\sqrt{nd}} \sum_{p=1}^{nd} \ket{p}^A \ket{p}^B \enspace.
\end{equation*}
Define unitary operators
\begin{equation}
\label{eq-encoding-operators}
R \quad \eqdef \quad  \sum_{a=1}^d \omega^a P_a \qquad \text{and} \qquad S \quad \eqdef \quad  \sum_{b=1}^d \omega^b Q_b \enspace,
\end{equation}
where $\omega \eqdef \mathrm{e}^{\frac{2 \pi \ii}{d}}$ is a primitive~$d$-th root of unity. Consider the set $\{ W_{st} \}_{s,t = 1}^d$, where~$W_{st} \eqdef R^s S^t$. We claim that these give us a superdense coding protocol.
\begin{theorem}
\label{thm-mum-protocol}
The pair~$( \mes^+, (W_{st}))$ is a~$d$-dimensional superdense coding protocol.
\end{theorem}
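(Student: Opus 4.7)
The plan is to exhibit a measurement for Bob that decodes the classical message perfectly. The cleanest route is to show that the~$d^2$ reduced states~$\rho_{st}$ on~$\cH_{A''}\otimes\cH_B$ are pairwise orthogonal as density matrices; Bob can then take~$M_{st}$ to be the projector onto~$\support(\rho_{st})$, completing the POVM by absorbing the orthogonal complement into, say,~$M_{1,1}$. Since the supports are mutually orthogonal, $\trace(M_{st}\rho_{st}) = 1$ for every $(s,t) \in [d]^2$, which is exactly what \Cref{def:sdc-protocol} demands.

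To prove pairwise orthogonality, I will reduce it to a computation on the encoding operators. Writing $\ket{\mes^+} = \frac{1}{\sqrt{nd}} \sum_{\alpha,j} \ket{\alpha,j}_A \ket{\alpha,j}_B$ in the Schmidt decomposition induced by $\cH_A \cong \complex^n \tensor \complex^d$, and expanding $\rho_{st} = \trace_{A'}\!\bigl[(W_{st}\tensor \id)\ketbra{\mes^+}{\mes^+}(W_{st}^\dagger \tensor \id)\bigr]$ in the computational basis, a direct index calculation yields the identity
\begin{equation*}
\trace(\rho_{st}\,\rho_{s't'}) \quad = \quad \frac{1}{(nd)^2}\, \bigl\|\trace_{A''}\!\bigl[W_{s't'} W_{st}^\dagger\bigr]\bigr\|_F^2 \enspace.
\end{equation*}
Thus the claim reduces to showing that the \emph{operator\/} $\trace_{A''}[W_{s't'} W_{st}^\dagger]$ on~$\cH_{A'}$ vanishes whenever $(s,t) \neq (s',t')$.

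To evaluate this partial trace, I note that $W_{s't'} W_{st}^\dagger = R^{s'} S^{t'-t} R^{-s}$, using that $R$ and~$S$ are unitary (the MUM conditions force $\set{P_a}$ and $\set{Q_b}$ to be PVMs). Using the canonical form $R^s = \id_n \tensor D^s$ with $D = \sum_a \omega^a \density{a}$, and the expansion $S^v = \frac{1}{d}\sum_{b,j,k} \omega^{vb}\, U^b_j (U^b_k)^\dagger \tensor \ketbra{j}{k}$ inherited from Eq.~\eqref{eq:Qb}, a short expansion gives
\begin{equation*}
\trace_{A''}\!\bigl[R^{s'} S^v R^{-s}\bigr] \quad = \quad \frac{1}{d} \sum_{b,j} \omega^{vb + (s'-s)j}\, U^b_j (U^b_j)^\dagger \quad = \quad \frac{1}{d}\Bigl(\sum_b \omega^{vb}\Bigr)\Bigl(\sum_j \omega^{(s'-s)j}\Bigr)\id_n \enspace,
\end{equation*}
where the crucial collapse to the $j=k$ diagonal uses the unitarity $U^b_j (U^b_j)^\dagger = \id_n$ from the second line of Eq.~\eqref{eq:MUM_U}. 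When $(s,t) \neq (s',t')$ in $[d]^2$, either $s-s'$ or $t-t'$ is nonzero modulo~$d$, so one of the two geometric sums of $d$-th roots of unity vanishes, giving the required orthogonality.

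The main obstacle I expect is the first displayed identity: one must verify that the reduced-state overlap equals the Frobenius norm squared of the \emph{operator\/} partial trace, not merely the square modulus of its scalar trace. Orthogonality of the joint purifications $(W_{st}\tensor \id)\ket{\mes^+}$ on~$\cH_A\otimes\cH_B$ would only yield $\trace[W_{st} W_{s't'}^\dagger] = 0$; for the reduced states on~$\cH_{A''}\otimes\cH_B$ to be orthogonal we need the stronger operator condition, obtained by tracing out only~$\cH_{A'}$. Establishing this demands careful index bookkeeping in the Schmidt decomposition, but once done, the remaining steps are a mechanical computation pivoting on the unitarity of the $U^b_j$ and the vanishing of root-of-unity sums.
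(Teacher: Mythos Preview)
Your proof is correct, and its overall architecture matches the paper's: both show that the reduced states~$\rho_{st}$ on~$\cH_{A''}\otimes\cH_B$ are mutually orthogonal by reducing to a partial-trace computation on a product of encoding operators, and both finish via vanishing sums of~$d$th roots of unity. The routes differ in execution. The paper applies the transpose trick $(O\otimes\id)\ket{\mes^+}=(\id\otimes O^\top)\ket{\mes^+}$ to move the action to Bob's side, writes out an explicit ensemble decomposition $\rho_{st}=\tfrac{1}{n}\sum_x\ketbraq{\psi^x_{st}}$, and then evaluates~$\trace_{B''}(\overline{W_{st}}\,W_{s't'}^\top)$ using the defining MUM relation $P_a=d\,P_a Q_b^\top P_a$ directly on the expansion $W_{st}=\sum_{a,b}\omega^{sa+tb}P_aQ_b$. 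You stay on Alice's side, use the Frobenius-norm identity $\trace(\rho_{st}\rho_{s't'})=\tfrac{1}{(nd)^2}\|\trace_{A''}[W_{s't'}W_{st}^\dagger]\|_F^2$, first simplify $W_{s't'}W_{st}^\dagger=R^{s'}S^{t'-t}R^{-s}$, and then invoke only the unitarity of the~$U^b_j$ from Eq.~\eqref{eq:MUM_U}. The two partial traces are conjugate-transposes of one another, and the MUM relation in the~$P$--$Q$ form is equivalent to the unitarity of the~$U^b_j$ in the block form, so the content is the same; your computation is a bit more streamlined, while the paper's ensemble decomposition yields an explicit orthonormal basis for the support of each~$\rho_{st}$.
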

\begin{proof}
We design a superdense coding protocol using the operators~$W_{st}$ as follows.
Registers~$A$ and~$B$ consist of subregisters~$A' A''$ and~$B' B''$, respectively. The subregisters~$A', B'$ are both~$n$-dimensional, and~$A'', B''$ are both~$d$-dimensional. 
In the superdense coding protocol, Alice and Bob hold registers~$A$ and~$B$, respectively, jointly prepared in state~$\ket{\mes^+}$. On input~$st \in [d] \times [d]$,
Alice applies the unitary operator~$W_{st}$ on her half of the maximally entangled state (in register~$A$), and sends the $d$-dimensional marginal state in subregister~$A''$ to Bob. 

Recall the property that for every linear operator $O$ on $\bC^{nd}$, we have $O \otimes \I_{nd} \ket{\mes^+} = \I_{nd} \otimes O\tran \ket{\mes^+}$, where the transposition is in the standard basis. Using this, we see that Bob ends up with the state~$\rho_{st}$ in registers~$A'' B$, where
\begin{align*}
\rho_{st} & \quad = \quad \trace_{A'} \big[ (W_{st} \otimes \I_{nd}) \ketbraq{\mes^+} (W^\adjoint_{st} \otimes \I_{nd}) \big] \\
    & \quad = \quad \trace_{A'} \big[ (\I_{nd} \otimes W_{st}\tran ) \ketbraq{\mes^+} (\I_{nd} \otimes \big(W^\adjoint_{st})\tran) \big] \\
    & \quad = \quad \frac{1}{n}\sum_{x=1}^n \ketbraq{\psi^x_{st}} \enspace,
\end{align*}
%
and the states~$\ket{\psi^x_{st}}$ are defined as
\begin{equation*}
\ket{\psi^x_{st}} \quad \eqdef \quad \frac{1}{\sqrt{d}} \sum_{l = 1}^d \ket{l} \otimes W_{st} \tran \ket{xl} \enspace. 
\end{equation*}
\suppress{
A straightforward computation shows that
\begin{equation*}
\ket{\psi^x_{st}} \quad = \quad \frac{1}{d \sqrt{d}} \sum_{b,j,k,l} \omega^{sj+tb} \ket{l} \otimes \Big[ \big( \overline{U^b_k} \big) \big(U^b_j \big) \tran \otimes \ketbra{k}{j} \Big] ( \ket{x} \otimes \ket{l} ) \enspace,  
\end{equation*}
%
}
Now
\begin{align}
\label{eq-inner-prod}
\braket{ \psi^x_{st} }{ \psi^{x'}_{s't'} } 
    & \quad = \quad \frac{1}{d} \bra{x} \left[ \trace_{B''} \left( \overline{W_{st}} W_{s' t'} \tran \right) \right] \ket{x'} \enspace,
\end{align}
where~$\overline{O}$ denotes the entry-wise complex conjugation of the operator~$O$ in the standard basis. Since~$\set{ P_a}$ and~$\set { Q_b}$ are a pair of projective measurements, the unitary operators $W_{st}$ can be written as
\suppress{
\begin{equation}\label{eq:Wst}
W_{st} \quad = \quad \frac1d \sum_{b,j,k = 1}^d \omega^{sj + tb} U^b_j (U^b_k)\hc \otimes \ketbra{j}{k} \enspace.
\end{equation}
}
\begin{equation}\label{eq:Wst}
W_{st} \quad = \quad \sum_{a,b = 1}^d \omega^{sa + tb} P_a Q_b \enspace.
\end{equation}
Since~$Q_b \tran = \overline{Q_b}$, and~$\set{ Q_b \tran}$ is also a projective measurement, we have
\begin{align*}
\trace_{B''} \left( \overline{W_{st}} W_{s' t'} \tran \right)
    & \quad = \quad \trace_{B''} \left( \sum_{a,b = 1}^d \omega^{- sa - tb} P_a \overline{Q_b} \sum_{a',b' = 1}^d \omega^{s'a' + t'b'} Q_{b'} \tran P_{a'}    \right) \\
    & \quad = \quad \sum_{a,b = 1}^d \omega^{ (s' - s)a + (t' - t)b} \trace_{B''} \left(  P_a Q_{b} \tran P_a \right) \enspace.
\end{align*}
Moreover~$P_a \tran = P_a$, and by definition of MUMs, we have~$P_a = P_a \tran = d P_a Q_b \tran P_a$.
Hence
\begin{align*}
\trace_{B''} \left( \overline{W_{st}} W_{s' t'} \tran \right)
    & \quad = \quad \frac{1}{d} \sum_{a,b = 1}^d \omega^{ (s' - s)a + (t' - t)b} \trace_{B''} \left( P_a \right) \\
    & \quad = \quad \updelta_{ss'} \updelta_{tt'} d \, \id \enspace,
\end{align*}
and by Eq.~(\ref{eq-inner-prod}), $\braket{ \psi^x_{st} }{ \psi^{x'}_{s't'} } = \updelta_{xx'} \updelta_{ss'} \updelta_{tt'}$.
This implies that
\begin{equation*}
\trace ( \rho_{st} \rho_{s't'} ) \quad = \quad \updelta_{s s'} \updelta_{t t'} \enspace,
\end{equation*}
i.e., the states~$\rho_{st}$ are mutually orthogonal. So there is a measurement that perfectly distinguishes these states, and~$( \mes^+, (W_{st}))$ is a~$d$-dimensional superdense coding protocol.
\end{proof}

\subsection{Rigidity and the direct-sum property}
\label{sec-rigidity-mum}

It turns out that \Cref{conj:high-dim-rigidity} implies that any superdense coding protocol derived from MUMs are direct sums of MUBs. We start by showing that the conjecture imposes a direct sum, i.e., block-diagonal structure on the encoding operators~$R,S$ defined in \Cref{sec-sdc-from-mum}.
\begin{proposition}
\label{prop-block-diagonal}
Suppose the MUMs~$\set{P_a}$ and~$\set{Q_b}$ are in canonical form.
If \Cref{conj:high-dim-rigidity} is true, then there is a basis~$(\ket{v_j} : j \in [n])$ for~$\complex^n$ and unitary operators~$(S_j : j \in [n])$ on~$\complex^d$ such that~$S = \sum_{j = 1}^n \density{v_j} \tensor S_j \,$.
\end{proposition}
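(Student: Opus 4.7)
The plan is to apply the rigidity conjecture to the superdense coding protocol derived from the MUMs via \Cref{thm-mum-protocol}, and extract the block-diagonal structure of~$S$ by leveraging the simple canonical form $R = \id \otimes Z_d$.

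First, I would invoke \Cref{conj:high-dim-rigidity} on the protocol~$(\ketbraq{\mes^+}, (W_{st}))$ with $W_{st} = R^s S^t$, obtaining a unitary~$V$ on~$\cH_{A'} \otimes \cH_{A''}$, unitaries~$\{C_{st}\}$ on~$\cH_{A'}$, an isometry~$W$, a density matrix~$\rho$, projectors~$\{P_r\}$ on~$\cH_{A'}$, and for each~$r$ an orthogonal unitary basis~$\{E_{r,(s,t)}\}$ satisfying the conjecture's conditions. Since $\ketbraq{\mes^+}$ is pure and $V \otimes W$ is an isometry, $\tau'$ is pure, forcing $\rho$ to be pure as well; together with the matching of marginals on~$A$, which gives $\rho_{A'} = \id/n$, this implies $\rho$ is maximally entangled on~$A'B'$, and in particular $\tau'_A = \id/(nd)$ has full rank. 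This full-rank property promotes each modulo-$\tau'$ equality to an operator equation up to a phase: $(C_{st}^\adjoint \otimes \id) W_{st} V^\adjoint = c_{st} \sum_r P_r \otimes E_{r,(s,t)}$ for phases~$c_{st}$.

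Next, renaming indices so that the identity encoding is $W_{(0,0)} = \id$, I would solve for $V = c_{00}^{-1} \sum_r (P_r C_{00}^\adjoint) \otimes E_{r,(0,0)}^\adjoint$, and substitute into the equation for $S = W_{(0,1)}$ to obtain
\[
S \quad = \quad \frac{c_{01}}{c_{00}} \sum_r (C_{01} P_r C_{00}^\adjoint) \otimes (E_{r,(0,1)} E_{r,(0,0)}^\adjoint) \enspace.
\]
Applying the same procedure to the rigidity equation for $R = W_{(1,0)} = \id \otimes Z_d$ yields another expression $V = c_{10}^{-1} \sum_r (P_r C_{10}^\adjoint) \otimes (E_{r,(1,0)}^\adjoint Z_d)$. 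Equating the two expressions for~$V$ and using the linear independence of the orthogonal projections $\{P_r\}$ to get termwise equality, a tensor-product decomposition argument would yield that $C_{00}^\adjoint C_{10}$ acts as a unit-modulus scalar on each~$\image P_r$, so $C_{00}^\adjoint C_{10}$ is a block-phase operator in~$\{P_r\}$, and that $E_{r,(1,0)}$ equals~$Z_d E_{r,(0,0)}$ up to a phase.

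The main obstacle is to derive an analogous structural result for~$C_{01}$: namely, that $C_{00}^\adjoint C_{01}$ is block-diagonal in the decomposition~$\{P_r\}$. Since~$S$ lacks the simple product form of~$R$, this cannot be obtained from its rigidity equation alone. Instead, I would extract further constraints from the rigidity equations for the family $W_{(s,1)} = R^s S$ for all~$s$, together with the orthogonality of the unitary bases $\{E_{r,(s,t)}\}$ (which span the space of $d \times d$ matrices for each~$r$), to rule out any off-diagonal coupling of $C_{00}^\adjoint C_{01}$ with respect to~$\{P_r\}$. Once this block-diagonality is established, write $C_{00}^\adjoint C_{01} = \sum_r K_r$ with each~$K_r$ a unitary on~$\image P_r$; diagonalizing each~$K_r$ in an orthonormal eigenbasis~$\{\ket{v_j^r}\}$ of~$\image P_r$, conjugating by~$C_{00}$ to obtain a basis~$\{\ket{v_k}\}$ of~$\complex^n$, and combining across~$r$ yields the desired form $S = \sum_k \ketbraq{v_k} \otimes S_k$ for unitaries~$S_k$ on~$\complex^d$.
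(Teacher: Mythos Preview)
Your preliminary reductions are sound: purity of~$\mes^+$ forces~$\rho$ pure with full-rank $A'$-marginal, so the conjecture's conclusion lifts to genuine operator identities up to absorbable phases, and your expression
\[
S \quad = \quad \frac{c_{01}}{c_{00}} \sum_r (C_{01} P_r C_{00}^\adjoint) \otimes (E_{r,(0,1)} E_{r,(0,0)}^\adjoint)
\]
matches what the paper derives. The gap is exactly where you flag it: you do not actually establish that~$C_{00}^\adjoint C_{01}$ is block-diagonal in~$\{P_r\}$, and the plan you sketch---harvesting further constraints from~$W_{(s,1)}$ for all~$s$ together with the basis property of~$\{E_{r,i}\}$---is not a proof, and it is not clear how to make it one.

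The paper supplies precisely the missing idea, and it is much shorter than your proposed detour. You only exploit one consequence of the canonical form, namely~$R = \id \otimes Z_d$; but the canonical form also gives~$Q_1 = \id \otimes \ketbraq{u}$ with~$\ket{u} = d^{-1/2}\sum_j \ket{j}$. Since~$S = \sum_b \omega^b Q_b$ and the~$Q_b$ are orthogonal, one has~$Q_1 S Q_1 = \omega Q_1$. Sandwiching your expression for~$S$ by~$\id \otimes \ketbraq{u}$ collapses each second tensor factor to a scalar~$\alpha_r \eqdef \bra{u} E_{r,(0,1)} E_{r,(0,0)}^\adjoint \ket{u}$, giving
\[
\sum_r \alpha_r \, C_{01} P_r C_{00}^\adjoint \quad = \quad \text{(phase)} \cdot \id \enspace, \qquad \text{i.e.,} \qquad \sum_r \alpha_r P_r \quad = \quad \text{(phase)} \cdot C_{01}^\adjoint C_{00} \enspace.
\]
Unitarity of the right-hand side forces~$|\alpha_r| = 1$ for every~$r$, whence~$C_{01} = \text{(phase)} \cdot C_{00} \sum_r \bar{\alpha}_r P_r$. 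Substituting back, the first tensor factors in~$S$ become~$C_{00} P_r C_{00}^\adjoint$---a single family of orthogonal projections conjugated by a common unitary---and choosing any orthonormal basis subordinate to this decomposition yields~$S = \sum_j \ketbraq{v_j} \otimes S_j$. No appeal to the other encodings~$W_{(s,1)}$ or to the spanning property of the~$E_{r,i}$ is required.
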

\begin{proof}
Let~$T_0 \eqdef R$ and~$T_1 \eqdef S$. Note that~$R = \id \tensor Z_d$, where~$Z_d$ is the generalized Pauli~Z operator on~$\complex^d$ (also called the ``clock'' operator). Since the MUMs are in canonical form, we have~$Q_1 = \id \tensor \density{u}$, where~$\ket{u} \eqdef \tfrac{1}{\sqrt{d}} \sum_{j = 1}^d \ket{j}$.

Suppose \Cref{conj:high-dim-rigidity} holds. Then there are unitary operators~$V,W$ on~$\complex^n \tensor \complex^d$, unitary operators~$C_0, C_1$ on~$\complex^n$, an integer~$m \in [n]$, orthogonal projection operators~$(\Pi_r : r \in [m])$, and for each~$r \in [m]$ and orthogonal unitary basis~$(E_{ri} : i \in [d^2])$ for the vector space of linear operators on~$\complex^d$ such that for each~$i \in \set{0,1}$,
\begin{align*}
((C_i \tensor \id) T_i V^\adjoint) \tensor \id) (V \tensor W) \ket{\mes^+} 
    & \quad = \quad \left( \sum_{r = 1}^m \Pi_r \tensor E_{ri} \tensor \id) \right) (V \tensor W) \ket{\mes^+} \enspace.
\end{align*}
Since~$ (V \tensor W) \ket{\mes^+} = (\id \tensor W V \tran) \ket{\mes^+}$, we get
\begin{align*}
((C_i \tensor \id) T_i V^\adjoint) \tensor \id) \ket{\mes^+} 
    & \quad = \quad \left( \sum_{r = 1}^m \Pi_r \tensor E_{ri} \tensor \id) \right) \ket{\mes^+} \enspace.
\end{align*}
Since a unitary operator on~$\complex^n \tensor \complex^d$ is completely determined by its action on the maximally entangled state~$\ket{\mes^+}$, this is equivalent to
\begin{align*}
(C_i \tensor \id) T_i V^\adjoint 
    & \quad = \quad \sum_{r = 1}^m \Pi_r \tensor E_{ri} \enspace.
\end{align*}
Multiplying the adjoint of the operators for~$i = 0$ to those for~$i = 1$ on the right, we get
\begin{align*}
(C_1 \tensor \id) T_1 V^\adjoint V T_0^\adjoint (C_0^\adjoint \tensor \id) 
    & \quad = \quad \sum_{r = 1}^m \Pi_r \tensor E_{r1} E_{r0}^\adjoint \enspace.
\end{align*}
Using~$T_0 = R = \id \tensor Z_d$ and~$T_1 = S$, this is equivalent to
\begin{align*}
S & \quad = \quad \sum_{r = 1}^m (C_1^\adjoint \Pi_r C_0) \tensor ( E_{r1} E_{r0}^\adjoint Z_d) \enspace.
\end{align*}
By definition of~$S$ and the orthgonality of~$\set{ Q_b}$, we have~$Q_1 S Q_1 = \omega Q_1$. So
\begin{align*}
\sum_{r = 1}^m (C_1^\adjoint \Pi_r C_0) \bra{u} ( E_{r1} E_{r0}^\adjoint Z_d) \ket{u} \tensor \density{u} & \quad = \quad \omega \; \id \tensor \density{u} \enspace.
\end{align*}
Define~$\alpha_r \eqdef \bra{u} ( E_{r1} E_{r0}^\adjoint Z_d) \ket{u}$. The above equation implies that
\begin{align*}
C_1^\adjoint \left( \sum_{r = 1}^m \alpha_r \Pi_r \right) C_0 & \quad = \quad \omega \, \id \enspace, \\
\text{i.e.,} \qquad \sum_{r = 1}^m \alpha_r \Pi_r & \quad = \quad \omega C_1 C_0^\adjoint \enspace.
\end{align*}
In particular, the operator on the LHS above is unitary, and~$\abs{ \alpha_r} = 1$ for all~$r$.
Thus, $C_1^\adjoint = \omega C_0^\adjoint \sum_{r} \bar{\alpha_r} \Pi_r $, and
\begin{align*}
S & \quad = \quad \sum_{r} (C_0^\adjoint \Pi_r C_0) \tensor \left( \omega \bar{\alpha_r} E_{r1} E_{r0}^\adjoint Z_d \right) \enspace.
\end{align*}
The proposition follows.
\end{proof}

Finally, we show that the block-diagonal structure of the encoding operators carries over to the MUMs.
\begin{theorem}
\label{thm-rigidity-direct-sum}
Suppose the MUMs~$\set{P_a}$ and~$\set{Q_b}$ are in canonical form.
If \Cref{conj:high-dim-rigidity} is true, then the MUMs~$\set{ P_a}$ and~$\set{ Q_b}$ are a direct sum of MUBs.
\end{theorem}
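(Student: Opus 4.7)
The plan is to combine \Cref{prop-block-diagonal} with discrete Fourier inversion to transfer the block-diagonal structure of the encoding operator~$S$ to each individual projector~$Q_b$, and then verify that each block gives rise to a pair of MUBs on~$\complex^d$. I will work throughout in the canonical-form basis, so that~$P_a = \id \otimes \density{a}$ on~$\complex^n \otimes \complex^d$ and~$R = \id \otimes Z_d\,$.

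First I would invoke \Cref{prop-block-diagonal} to obtain an orthonormal basis~$(\ket{v_j} : j \in [n])$ of~$\complex^n$ and unitary operators~$(S_j : j \in [n])$ on~$\complex^d$ such that
\[
S \quad = \quad \sum_{j = 1}^n \density{v_j} \otimes S_j \enspace.
\]
Because~$\set{Q_b}$ are pairwise orthogonal projectors summing to~$\id$, we have~$S^s = \sum_b \omega^{sb} Q_b$ for every integer~$s$, and the usual Fourier inversion gives~$Q_b = \frac{1}{d}\sum_{s=0}^{d-1} \omega^{-sb} S^s$. Substituting the block-diagonal form of~$S$ yields
\[
Q_b \quad = \quad \sum_{j = 1}^n \density{v_j} \otimes Q_b^{(j)} \enspace, \qquad \text{where} \qquad Q_b^{(j)} \quad \eqdef \quad \frac{1}{d} \sum_{s=0}^{d-1} \omega^{-sb} S_j^s \enspace.
\]
Thus every~$Q_b$ is block-diagonal with respect to the decomposition~$\cH = \bigoplus_j \cH_j$, where~$\cH_j \eqdef \Span(\ket{v_j}) \otimes \complex^d$ is a~$d$-dimensional subspace. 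The operators~$P_a = \sum_j \density{v_j} \otimes \density{a}$ trivially respect the same decomposition.

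Next I would show that on each summand the restricted measurements form MUBs. The MUM condition~$P_a = d P_a Q_b P_a$ restricts to~$\density{a} = d \density{a} Q_b^{(j)} \density{a}$ on~$\cH_j$, which forces~$\bra{a} Q_b^{(j)} \ket{a} = 1/d$ for every~$a,b,j$. Since each~$Q_b^{(j)}$ is a projection (inherited from~$Q_b$ via block-diagonality) and~$\trace(Q_b^{(j)}) = \sum_a 1/d = 1$, it must be a rank-one projector~$Q_b^{(j)} = \density{w_b^{(j)}}$. The diagonal condition then reads~$\size{\braket{a}{w_b^{(j)}}}^2 = 1/d$, so~$\set{\density{a}}$ and~$\set{\density{w_b^{(j)}}}$ are MUBs on~$\cH_j \cong \complex^d$. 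Hence~$\set{P_a}$ and~$\set{Q_b}$ are a direct sum of MUBs, as required.

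The real work of the argument has already been done in \Cref{prop-block-diagonal}, where the rigidity conjecture is used to extract a common block structure on the~$\complex^n$-factor; once that is in hand, Fourier inversion immediately propagates the structure to the individual projectors~$Q_b$, and the MUM relations together with the projection/trace constraints mechanically force each block to be rank one and mutually unbiased with the computational basis. The subtlety to watch out for is that the basis~$(\ket{v_j})$ must be the \emph{same\/} across all blocks---independent of the Fourier index~$b$---which is exactly what \Cref{prop-block-diagonal} guarantees, and which is indispensable for writing~$Q_b$ as a single block-diagonal operator in a fixed decomposition of~$\cH$.
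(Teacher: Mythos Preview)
Your proposal is correct and follows essentially the same route as the paper: invoke \Cref{prop-block-diagonal} to block-diagonalise~$S$, apply discrete Fourier inversion to carry the block structure over to each~$Q_b$, and then use the MUM relation blockwise to force each~$Q_b^{(j)}$ to be a rank-one projector unbiased with respect to the computational basis. The only cosmetic difference is that the paper justifies the projection property of~$Q_b^{(j)}$ via~$S_j^d = \id$ (so that~$Q_b^{(j)}$ is the spectral projector of~$S_j$ onto the~$\omega^b$-eigenspace), whereas you infer it directly from the block-diagonality of the projector~$Q_b$; both are valid.
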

\begin{proof}
By \Cref{prop-block-diagonal}, $S = \sum_{k = 1}^n \density{v_k} \tensor S_k$ for some orthonormal basis~$( \ket{v_k} )$ for~$\complex^n$, and unitary operators~$(S_k)$ on~$\complex^d$. Since~$S^d = \id$, we have~$S_k^d = \id$ for all~$k$. So all the eigenvalues of~$S_k$ are~$d$th roots of unity.

We may write~$P_a = \sum_k \density{v_k} \tensor \density{a}$. The operator~$Q_b$ inherits the same structure, as we may extract it from the encoding operator~$S$ as follows:
\begin{align*}
\label{eq-proj}
Q_b & \quad = \quad \frac{1}{d} \sum_{j = 1}^d \omega^{-bj} S^j \\
    & \quad = \quad \sum_{k = 1}^n \density{v_k} \tensor \frac{1}{d} \sum_{j = 1}^d \omega^{-bj} S_k^j \enspace.
\end{align*}
The operator~$Q_{bj} \eqdef \tfrac{1}{d} \sum_{j = 1}^d \omega^{-bj} S_k^j$ is the orthogonal projection onto the~$\omega^{b}$ eigenspace of~$S_k$. So~$Q_b$ is also a direct sum of orthogonal projection operators. The MUM property implies that for each~$j$, $\set{\density{a} : a \in [d]}$ and~$\set{ Q_{bj} : b \in [d]}$ are MUBs. The lemma follows.
\end{proof}

In \Cref{sec-mum-examples,sec-new-mums} we presented MUMs that are not direct sums of MUBs. It follows from \Cref{thm-rigidity-direct-sum} that \Cref{conj:high-dim-rigidity} is false for an infinite number of dimensions~$d \ge 4$. 

\bibliography{refs}

\suppress{
\appendix

\section{MUMs and the direct sum of MUBs}
\label{sec-mum-direct-sum}

For completeness, we present a complete proof of \Cref{prop-direct-sum-mub} here. 

\directsummub*

\begin{proof}
Note that if we choose the basis to respect the direct sum structure, the $P_a$ operators can be written in the form of Eq.~\eqref{eq:Pa} without loss of generality, i.e.,
\begin{equation}
P_a \quad = \quad \id \otimes \ketbraq{a} \enspace,
\end{equation}
where the second Hilbert space factor is isomorphic to $\bC^d$ and $\set{\ket{a}}$ is an orthonormal basis for it. The $Q_b$ operators can then in general be written in the form of Eq.~\eqref{eq:Qb}, i.e.,
\begin{equation}
Q_b \quad = \quad \frac{1}{d} \sum_{j,k \in [d]} V^b_{jk} \otimes \ketbra{j}{k} \enspace,
\end{equation}
where the $V^b_{jk}$ operators are unitary (see Eq.~\eqref{eq:Xbjk}). Note that at this point there is still a unitary freedom in the description of the operators, i.e., one can choose the bases of the first Hilbert space factors. This choice of basis is given by a unitary operator of the form $\sum_a W_a \otimes \ketbraq{a}$. The change of basis leaves the $P_a$ invariant and maps the $V^b_{jk}$ to $W_j V^b_{jk} (W_k)^\dagger$ (or equivalently, maps the $U^b_j$ to $W_j U^b_j$).

It is clear that in a given basis (in which the $V^b_{jk}$ matrices are written), the $\set{P_a}$ and $\set{Q_b}$ measurements are direct sums of MUBs if and only if there exists an orthonormal basis $\set{\ket{e_n}}$ on the first Hilbert space factor such that the measurements $(\bra{e_n} \otimes \id) P_a (\ket{e_n} \otimes \id) = \ketbraq{a}$ and
\begin{equation}
T^b_n \equiv (\bra{e_n} \otimes \id) Q_b (\ket{e_n} \otimes \id) = \frac1d \sum_{j,k} \bra{e_n} V^b_{jk} \ket{e_n} \otimes \ketbra{j}{k}
\end{equation}
are MUBs for all $n$. Since $V^b_{jj} = \id$ for all $b$ and $j$, we have that
\begin{equation}
\tr T^b_n = \frac1d \sum_{j,k} \bra{e_n} V^b_{jk} \ket{e_n} \tr \ketbra{j}{k} = \frac1d \sum_j \bra{e_n} V^b_{jj} \ket{e_n} = 1 \quad \forall b,n
\end{equation}
and
\begin{equation}
\bra{a} T^b_n \ket{a} = \frac1d \sum_{j,k} \bra{e_n} V^b_{jk} \ket{e_n} \braket{a}{j} \braket{k}{a} = \frac1d \bra{e_n} V^b_{aa} \ket{e_n} = \frac1d \quad \forall b,n,a.
\end{equation}
Hence, the only remaining condition is that $(T^b_n)^2 = T^b_n$ for all $b,n$. That is, $P_a$ and $Q_b$ are a direct sum of MUBs in the given basis if and only if $(T^b_n)^2 = T^b_n$ for all $b,n$. We have
\begin{equation}
(T^b_n)^2 = \frac{1}{d^2} \sum_{j,k,l,m} \bra{e_n} V^b_{jk} \ket{e_n} \bra{e_n} V^b_{lm} \ket{e_n} \ket{j} \braket{k}{l} \bra{m} = \frac{1}{d^2} \sum_{j,k,m} \bra{e_n} V^b_{jk} \ket{e_n} \bra{e_n} V^b_{km} \ket{e_n} \ketbra{j}{m}
\end{equation}
That is, we need that
\begin{equation}
\sum_k \bra{e_n} V^b_{jk} \ket{e_n} \bra{e_n} V^b_{km} \ket{e_n} = d \bra{e_n} V^b_{jm} \ket{e_n} \quad \forall n,b,j,m.
\end{equation}
Rewriting this in terms of the $U^b_j$ operators from Eq.~\eqref{eq-mum-u}, we need that
\begin{equation}\label{eq:forMUM_U}
\sum_k \bra{e_n} U^b_j (U^b_k)^\dagger \ket{e_n} \bra{e_n} U^b_k (U^b_m)^\dagger \ket{e_n} = d \bra{e_n} U^b_j (U^b_m)^\dagger \ket{e_n} \quad \forall n,b,j,m.
\end{equation}
In particular, we need this equality to hold for $j=m=1$. Remembering that $U^b_1 = \id$, this translates to
\begin{equation}
\sum_k \bra{e_n} (U^b_k)^\dagger \ket{e_n} \bra{e_n} U^b_k \ket{e_n} = d \quad \forall n,b.
\end{equation}
From unitarity, we have that
\begin{equation}
\sum_k \bra{e_n} (U^b_k)^\dagger \ket{e_n} \bra{e_n} U^b_k \ket{e_n} = \sum_k | \bra{e_n} U^b_k \ket{e_n} |^2 \le \sum_k 1 = d \quad \forall n,b,
\end{equation}
and equality holds only if $U^b_k \ket{e_n} = \lambda^b_{n,k} \ket{e_n}$ for all $n,b,k$ for some $\lambda^b_{n,k} \in \bC$ and $|\lambda^b_{n,k}| = 1$. That is, only if all the $U^b_k$ operators are diagonal in the same basis, which is equivalent to
\begin{equation}\label{eq:commUproof}
[U^b_j, U^{b'}_{j'}] = 0 \quad \forall b,b',j,j'.
\end{equation}
These commutation relations are therefore necessary for $\set{P_a}$ and $\set{Q_b}$ to be a direct sum of MUBs in the given basis, but it is also sufficient: notice that if $U^b_k \ket{e_n} = \lambda^{b}_{n,k} \ket{e_n}$ for all $b,n,k$, then for all $k$ and for all $n,b,j,m$ we have that
\begin{equation}
\begin{split}
\bra{e_n} U^b_j (U^b_k)^\dagger \ket{e_n} \bra{e_n} U^b_k (U^b_m)^\dagger \ket{e_n} & \left. = \bra{e_n} \bar{\lambda}^b_{n,j} \bar{\lambda}^b_{n,k} \ketbraq{e_n} \lambda^b_{n,k} \bar{\lambda}^b_{n,m} \ket{e_n}  =  \bar{\lambda}^b_{n,j} \bar{\lambda}^b_{n,m} = \bra{e_n} U^b_j (U^b_m)^\dagger \ket{e_n},
\right.
\end{split}
\end{equation}
and hence, Eq.~\eqref{eq:forMUM_U} holds for all $n,b,j,m$. Therefore, Eq.~\eqref{eq:commUproof} is necessary and sufficient for $\set{P_a}$ and $\set{Q_b}$ to be a direct sum of MUBs in the given basis.

Last, it is easy to see that it is sufficient to check this commutation relation in the basis in which the MUMs are in the canonical form. For an arbitrary basis, in which the MUMs are described by $U^b_j$, we can bring the pair to the canonical form by the basis change $\tilde{U}^b_j = (U^1_j)^\dagger U^b_j$. It is clear that the commutation of the $U^b_j$ implies the commutation of the $\tilde{U}^b_j$. Therefore, if the $\tilde{U}^b_j$ do not commute, then the $U^b_j$ do not commute either and hence there is no basis in which $\set{P_a}$ and $\set{Q_b}$ are direct sums of MUBs. Furthermore, if the $\tilde{U}^b_j$ commute, then the canonical form is already a basis in which $P_a$ and $Q_b$ are direct sums of MUBs. Therefore, checking the commutation relation \eqref{eq:commUproof} of the canonical form is equivalent to checking whether the MUMs are direct sums of MUBs in any basis.
\end{proof}
}

\end{document}